\documentclass[journal]{IEEEtran}
\usepackage{amsfonts}
\usepackage{dsfont}
\usepackage{amssymb}
\usepackage{graphicx}
\usepackage{subfigure}
\usepackage{enumerate}
\usepackage{amsmath}
\usepackage{color}
\usepackage{amsthm}
\usepackage{amsmath}
\usepackage{algorithm}
\usepackage{algpseudocode}
\usepackage[bookmarks=false]{hyperref}
\hypersetup{hidelinks}
\usepackage{breakurl}
\usepackage{cite}
\usepackage{epstopdf}
\usepackage{verbatim}

%\PassOptionsToPackage{bookmarks={false}}{hyperref}
% correct bad hyphenation here

\hyphenation{since}
\allowdisplaybreaks[2]
%%% \argmin

\newcommand{\bp}{\begin{proof} \small }
\newcommand{\ep}{\end{proof} \normalsize}
\newcommand{\epx}{\end{proof} \small}
\newcommand{\bpa}{\begin{proofappx} \footnotesize }
\newcommand{\epa}{\end{proofappx} \small }
\newtheorem{theorem}{Theorem}

\newtheorem{remark}{Remark}
\newtheorem*{theorem*}{Theorem}
\newtheorem*{proposition*}{Proposition}
\newtheorem*{corollary*}{Corollary}
\newtheorem*{lemma*}{Lemma}
\newtheorem*{assumption*}{Assumption}
\newtheorem*{definition*}{Definition}
\newtheorem*{claim*}{Claim}

\newcommand{\be}{\begin{equation}}
\newcommand{\ee}{\end{equation}}
\newcommand{\bs}{\begin{subequations}}
\newcommand{\es}{\end{subequations}}
\newcommand{\bq}{\begin{eqnarray}}
\newcommand{\eq}{\end{eqnarray}}
\newcommand{\bqn}{\begin{eqnarray*}}
\newcommand{\eqn}{\end{eqnarray*}}

\newcommand{\ba}{\left[ \begin{array}}
\newcommand{\ea}{\\ \end{array} \right]}
\newcommand{\ben}{\begin{enumerate}}
\newcommand{\een}{\end{enumerate}}

\def\A{{\boldsymbol{A}}}

\def\H{{\boldsymbol{H}}}

\def\b{{\boldsymbol{b}}}

\def\k{{\boldsymbol{k}}}
\def\l{{\boldsymbol{l}}}

\def\w{{\boldsymbol{w}}}
\def\x{{\boldsymbol{x}}}

\def\z{{\boldsymbol{z}}}

%%%%%%% Definition of Zint
\def\real{{\mathchoice%
{\hbox{\rm\setbox1=\hbox{I}\copy1\kern-.45\wd1 R}}
{\hbox{\rm\setbox1=\hbox{I}\copy1\kern-.45\wd1 R}}
{\hbox{\scriptsize\rm\setbox1=\hbox{I}\copy1\kern-.45\wd1 R}}
{\hbox{\scriptsize\rm\setbox1=\hbox{I}\copy1\kern-.45\wd1 R}}}}

\def\Zint{{\mathchoice{\setbox1=\hbox{\sf Z}\copy1\kern-.75\wd1\box1}
{\setbox1=\hbox{\sf Z}\copy1\kern-.75\wd1\box1}
{\setbox1=\hbox{\scriptsize\sf Z}\copy1\kern-.75\wd1\box1}
{\setbox1=\hbox{\scriptsize\sf Z}\copy1\kern-.75\wd1\box1}}}
\newcommand{\complex}{ \hbox{\rm C\kern-0.45em\rule[.07em]{.02em}{.58em}%
\kern 0.43em}}

\begin{document}

\title{Coded Computation across Shared Heterogeneous Workers with Communication Delay}
	
\author{ 
	Yuxuan Sun,~\IEEEmembership{Member,~IEEE,} 
	Fan Zhang, 
	Junlin Zhao,~\IEEEmembership{Member,~IEEE,} \\
	Sheng Zhou,~\IEEEmembership{Member,~IEEE,} 
	Zhisheng Niu,~\IEEEmembership{Fellow,~IEEE,} 
	Deniz G\"und\"uz,~\IEEEmembership{Senior Member,~IEEE}
	\thanks{ 
		Y. Sun, F. Zhang, S. Zhou and Z. Niu are with the Beijing National Research Center for Information Science and Technology, Department of Electronic Engineering, Tsinghua University, Beijing 100084, China (e-mail: sunyuxuan@tsinghua.edu.cn, zhang-f17@tsinghua.org.cn, sheng.zhou@tsinghua.edu.cn, niuzhs@tsinghua.edu.cn).}  
	\thanks{J. Zhao is with the Chinese University of Hong Kong (Shenzhen), Shenzhen 518172, China (e-mail: zhaojunlin@cuhk.edu.cn).}
	\thanks{D. G\"und\"uz is with the Department of Electrical and Electronic Engineering, Imperial College London, London SW7 2BT, UK (e-mail: d.gunduz@imperial.ac.uk).}
%	\thanks{Y. Sun, F. Zhang, S. Zhou and Z. Niu are sponsored in part by the National Key R\&D Program of China No. 2020YFB1806605, by the National Natural Science Foundation of China (No. 62022049, No. 61871254, No. 91638204, No. 61861136003), by the China Postdoctoral Science Foundation No. 2020M680558, and Hitachi Ltd. (Corresponding author: Sheng Zhou) }
%	\thanks{D. G\"und\"uz received funding from the European Research Council (ERC) under Starting Grant BEACON (grant No. 677854). } 
	\thanks{Part of this work has been presented in IEEE GLOBECOM 2019 \cite{Sun2019Globecom}. }
}

% make the title area
\maketitle

\begin{abstract}
	Distributed computing enables large-scale computation tasks to be processed over multiple workers in parallel. However, the randomness of communication and computation delays across workers causes the straggler effect, which may degrade the performance. Coded computation helps to mitigate the straggler effect, but the amount of redundant load and their assignment to the workers should be carefully optimized.
	In this work, we consider a \emph{multi-master heterogeneous-worker} distributed computing scenario, where multiple matrix multiplication tasks are encoded and allocated to workers for parallel computation. The goal is to minimize the communication plus computation delay of the slowest task. 
	We propose worker assignment, resource allocation and load allocation algorithms under both dedicated and fractional worker assignment policies, where each worker can process the encoded tasks of either a single master or multiple masters, respectively. Then, the non-convex delay minimization problem is solved by employing the Markov's inequality-based approximation, Karush-Kuhn-Tucker conditions, and successive convex approximation methods. Through extensive simulations, we show that the proposed algorithms can reduce the task completion delay compared to the benchmarks, and observe that dedicated and fractional worker assignment policies have different scopes of applications.
\end{abstract}

\begin{IEEEkeywords}
	Coded computation, communication delay, Markov's inequality, convex optimization, successive convex approximation.
\end{IEEEkeywords}

\IEEEpeerreviewmaketitle

\section{Introduction}	

With the fast development of artificial intelligence technologies and the explosion of data, computation tasks for the training and inference of machine learning (ML) models are becoming increasingly complex and demanding, which are almost impossible to be realized on a single machine. Distributed computing frameworks have been developed to parallelize these computations \cite{MapReduce,Googlelarge-scale}, where a centralized \emph{master} node takes charge of task partition, data dissemination, and result collection, and distributed computing nodes, called \emph{workers}, process partial computation tasks in parallel.

While parallel processing across multiple workers speeds up computation, the overall delay depends critically on the slowest worker. According to the experiments on the commercial Amazon elastic compute cloud (EC2) platform, some workers might experience much longer computation and communication delays than the average  \cite{lee2018speeding,codedhet,hierarchical}. This fact is mainly due to the randomness of the system, e.g., time-varying stochastic workloads of the workers, or the traffic over the communication network connecting the workers to the master. Such randomness leads to the so-called \emph{straggler effect}, which substantially increases the overall computation delay and becomes a major bottleneck in distributed computing.

The key idea to mitigate the straggler effect is to add redundancy to the computation tasks, so that the computation result does not rely on receiving results from all the workers. State-of-the-art approaches mainly include redundant scheduling of computation tasks \cite{gardner2017redundancy,joshi2017efficient,mma2019computation}, and various coding schemes \cite{Ng2020survey}, such as maximum distance separable (MDS) coding \cite{lee2018speeding,li2016unified,codedhet, hierarchical, Kim2021optimal, zhang2021coded}, gradient coding \cite{tandon2017,Reisizadeh2019tree,Bitar2020stochastic}, and polynomial coded computation \cite{Yu2019lagrange,Hasircioglu2021Bivariate}. Among them, the easiest policy is to replicate each task to multiple workers upon its arrival, and the optimal number of replicas can be derived under exponential \cite{gardner2017redundancy} or general service time distributions \cite{joshi2017efficient}. The orders of partitioned tasks at different workers are designed in \cite{mma2019computation}, and the impact of redundancy on the task completion delay under different scheduling orders is characterized.

Compared with simple task replication, coding can further improve the efficiency of computation. MDS coding schemes under different system settings have been widely investigated for matrix multiplication, which is the most common type of computation task in the distributed computing system. With $N$ homogeneous workers, it is proved in \cite{lee2018speeding} that MDS coding can reduce the computation delay by $O(\log N)$ compared to uncoded computation. Considering that workers have heterogeneous computing capabilities, the load allocation algorithms are proposed in \cite{codedhet} and \cite{Kim2021optimal} for a single-task scenario, both with asymptotic optimality. Based on \cite{codedhet}, an online, recursive load allocation algorithm is further proposed in \cite{zhang2021coded} for the random task arrival case, where cancellation is enabled to clear the unfinished parts of each task upon its completion, so as to avoid unnecessary computations. 

Although stragglers are slower than the average computation speed, it is still possible for them to provide partial results. This can be achieved by the hierarchical coded computation framework \cite{hierarchical}, or multi-message communications \cite{Ozfatura2019speeding, Ozfatura2020straggler, Buyukates2020timely, Hasircioglu2021Bivariate}. Specifically, in the hierarchical framework, the coded task at each worker is partitioned into multiple layers. Stragglers are able to finish the lower layer sub-tasks and thus the coding redundancy in the lower layers can be reduced to improve system efficiency \cite{hierarchical}.
Multiple messages that include partial computation result are allowed to transmit from each worker to the master at each time slot, and thus stragglers can contribute a few messages, not none, to the system \cite{Ozfatura2019speeding}. Multi-message communication may introduce additional transmission overhead, and the corresponding  trade-off of communication and computation delay is investigated in \cite{Ozfatura2020straggler}. 
Bivariate polynomial coding is introduced in \cite{Hasircioglu2021Bivariate}, and is shown to reduce the average computation delay with respect to univariate polynomial alternatives.
Such method is further combined with the concept of age of information for timely distributed computing in \cite{Buyukates2020timely}. 

The papers above mainly address the straggler effect caused by the randomness of computation delay. Meanwhile, as the communication data volume between the master and worker nodes is usually high, the communication delay cannot be ignored either. 
%In particular, with the development of edge computing and edge intelligence, 
Particularly, master and worker nodes might be base stations, mobile phones and smart vehicles at the edge of the wireless network, where the communication delay through wireless links may be highly stochastic and non-negligible. A scalable framework is proposed in \cite{Li2017scalable} for coded distributed computing over wireless networks, where the communication load does not scale with the number of workers. Considering an MDS-coded distributed computing system with homogeneous workers, the impact of packet erasure channel on the delay of tasks is analyzed in \cite{Han2019coded}. Under heterogeneous settings, fixed transmission rate is considered in \cite{Wu2020latency}, and the load allocation of MDS-coded tasks is optimized. A cooperative transmission scheme for coded matrix multiplication is proposed in \cite{Li2021coded} to reduce the inter-cell interference, while a joint coding and node scheduling algorithm is proposed in \cite{Huynh2021joint} based on reinforcement learning.

Most existing papers on distributed coded computing only consider a single-master scenario, and the impact of communication delay on the load allocation has not been sufficiently investigated. In this work, we consider a multi-master heterogeneous-worker distributed computing scenario, where multiple matrix multiplication tasks are encoded with MDS codes, and allocated to workers for parallel computation, with random communication and computation delay. 
The goal is to jointly design worker assignment and load allocation algorithms to minimize the completion delay of all the tasks. The main contributions of this work are summarized as follows:

1) We consider both dedicated and fractional worker assignment policies, where each worker can process the encoded tasks of either a single master or multiple masters, respectively. Considering the randomness of communication and computation delays, we formulate a unified delay minimization problem for the joint allocation of computing power, communication bandwidth and task load.

2) For dedicated worker assignment, we obtain a non-convex mixed-integer non-linear programming problem (MINLP). 
The load allocation problem is solved first by deriving a convex approximation problem with Markov's inequality.
%The optimal load allocations to a Markov's inequality based convex approximation problem, and a computation delay dominant special case optimization are derived, respectively. 
Worker assignment is then transformed to a max-min allocation problem, which is NP-hard and solved with greedy heuristics. A successive convex approximation (SCA) based algorithm is proposed to further enhance the load allocation.

3) For fractional worker assignment, the optimization problem is non-convex. We again use Markov's inequality to simplify the problem, and transform the fractional worker assignment and resource allocation problem to max-min allocation by deriving its optimality condition. A greedy algorithm is proposed accordingly.

4) Simulations under various settings verify the feasibility of the proposed Markov's inequality based approximation, and show the significant delay reduction of the proposed algorithms over benchmarks. In particular, when using Amazon EC2 for delay evaluation, about $82\%$ and $30\%$ delay reductions are achieved by the proposed algorithms compared to the uncoded and coded benchmarks, respectively. 

The rest of the paper is organized as follows. In Section \ref{sys}, we introduce the system model and formulate the problem. In Section \ref{sec_dedi}, we propose load allocation and worker assignment algorithms under the dedicated case. In Section \ref{sec_frac}, we further consider the fractional assignment case. Simulation results are shown in Section \ref{sim}, and conclusions are given in Section \ref{con}.

\begin{figure*}[!t]
	\centering
	\includegraphics[width=0.55\textwidth]{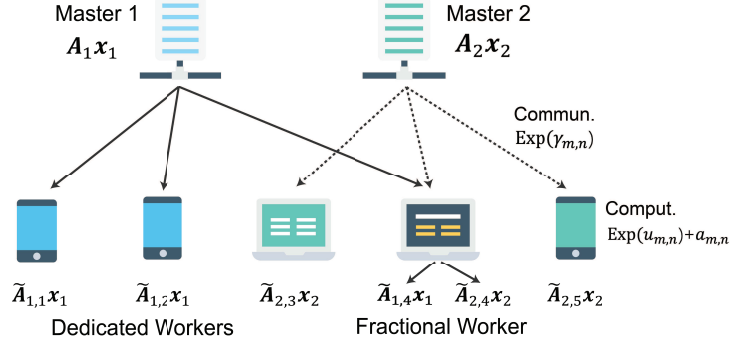}
	\vspace{-3mm}
	\caption{Illustration of a distributed computing system with multiple master and worker nodes.}	\label{system1}
	\vspace{-3mm}
\end{figure*}

\section{System Model and Problem Formulation} \label{sys}
As shown in Fig. \ref{system1}, we consider a distributed computing system with $M$ master nodes and $N$ worker nodes, denoted by  $\mathcal{M}=\{1,2,...,M\}$ and $\mathcal{N}=\{1,2,...,N\}$, respectively. Each master $m$ has a matrix-vector multiplication task, denoted by $\A_m\x_m$, where $\A_m\in \mathbb{R}^{L_m\times S_m}$, $\x_m\in \mathbb{R}^{S_m\times 1}$, and $L_m,S_m$ are positive integers. Each task can be partitioned and allocated to a subset of workers and processed by them in parallel. Local computation at the master is also available, and thus the set of nodes that can serve master $m$ is defined as $\mathcal{N}'\triangleq \mathcal{N}\cup\{0\}$, where index $0$ represents local processing.

To reduce the straggler effect brought by the randomness of communication and computation, we introduce redundancy to each task through MDS coded computation. Specifically, each master encodes matrix $\A_m$ in units of rows to get its coded version $\boldsymbol{\widetilde{A}}_m\in \mathbb{R}^{\widetilde{L}_m\times S_m}$, where $\widetilde{L}_m\geq L_m$ denotes the number of coded rows. 
Then, the coded matrix $\boldsymbol{\widetilde{A}}_m$ is divided into $N+1$ disjoint sub-matrices $\boldsymbol{\widetilde{A}}_{m,0}$, $\boldsymbol{\widetilde{A}}_{m,1}$, $\ldots$, $\boldsymbol{\widetilde{A}}_{m,N}$, where $\boldsymbol{\widetilde{A}}_{m,n}$ has $l_{m,n}\in\mathbb{N}$ rows, i.e., $\boldsymbol{\widetilde{A}}_{m,n}\in \mathbb{R}^{l_{m,n}\times S_m}$.
%Among which, $\boldsymbol{\widetilde{A}}_{m,0}\in \mathbb{R}^{l_{m,0}\times S_m}$ is processed by the master locally, where $l_{m,0}\in\mathbb{N}$ denotes the number of rows in $\boldsymbol{\widetilde{A}}_{m,0}$. For $\forall n\in\mathcal{N}$, define $l_{m,n}\in\mathbb{N}$ as the number of coded rows in $\boldsymbol{\widetilde{A}}_{m,n}$, i.e., $\boldsymbol{\widetilde{A}}_{m,n}\in \mathbb{R}^{l_{m,n}\times S_m}$, with $\sum_{n=0}^{N}l_{m,n}=\widetilde{L}_m$. 
Note that, $l_{m,n}=0$ indicates no assignment from master $m$ to worker $n$. Let $\Omega_m\triangleq\{n|n\in\mathcal{N}, l_{m,n}>0\}$ be the set of workers to serve master $m$. 

After task encoding and assignment, each master $m$ transmits $\boldsymbol{\widetilde{A}}_{m,n}$ and $\x_m$ to worker $n\in \Omega_m$ through their communication channel. We assume that the channel of each worker is orthogonal with that of others, and each worker can allocate its channel bandwidth to multiple masters and communicate with them simultaneously. This assumption is suitable for many realistic scenarios, e.g., the communication link is wired, or each worker is a base station with orthogonal wireless bandwidth.
%worker $n\in \Omega_m$ receives $\boldsymbol{\widetilde{A}}_{m,n}$ and $\x_m$ from master $m$, 
Each worker $n$ calculates the multiplication of $\boldsymbol{\widetilde{A}}_{m,n}$ and $\x_m$, and transmits back the result. 
Finally, master $m$ can recover the result of the original task $\A_m\x_m$ upon receiving the inner products of any $L_m$ out of $\widetilde{L}_m$ coded rows of $\boldsymbol{\widetilde{A}}_m$ and vector $\x_m$.

\subsection{Worker Assignment Policy}
We consider two worker assignment policies in this work:

1) \emph{Dedicated worker assignment:}
Each worker only serves a single master. 
For $\forall n\in\mathcal{N}, m\in\mathcal{M}$, let $k_{m,n}\in\{0,1\}$ be the worker assignment indicator, where $k_{m,n}=1$ if worker $n$ is assigned a coded task by master $m$, and $k_{m,n}=0$ otherwise. We have $\sum_{m=1}^{M}k_{m,n}\leq1, \forall n \in\mathcal{N}$.

2) \emph{Fractional worker assignment:}
We allow each worker to serve multiple masters simultaneously through processor sharing. Let $k_{m,n}\in [0,1]$ be the fraction of computing power of worker $n$ allocated to master $m$, with $\sum_{m=1}^{M}k_{m,n}\leq1, \forall n \in\mathcal{N}$. 
Define $b_{m,n}\in [0,1]$ as the fraction of bandwidth allocated to the link between master $m$ and worker $n$, with $\sum_{m=1}^{M}b_{m,n}\leq1, \forall n \in\mathcal{N}$.

We assume that a master is always dedicated, i.e., it only computes local task but not helping others. Therefore, for $\forall m\in\mathcal{M}$, we have $k_{m,0}=1$ and $b_{m,0}=1$.
Also note that, for dedicated worker assignment, the bandwidth allocation variable $b_{m,n}$ is binary, with $b_{m,n}=k_{m,n}$.

\subsection{Communication and Computation Delays}

We consider the delay of transmitting $\boldsymbol{\widetilde{A}}_{m,n}$ from master $m$ to worker $n$, and ignore the transmission delays of $\x_m$ and the computation results. This is because the size of $\boldsymbol{\widetilde{A}}_{m,n}$ is typically much larger than that of $\x_m$ and the result vector. Moreover, as $\x_m$ is shared among multiple workers that serve master $m$, $\x_m$ can be transmitted in a more efficient way, such as broadcast or multicast.

The communication delay to transmit a single coded row from master $m$ to worker $n$ using the whole bandwidth is modeled by an exponential distribution \cite{Han2019coded}, with rate parameter $\gamma_{m,n}$. Define the total communication delay of transmitting $\boldsymbol{\widetilde{A}}_{m,n}$ using $b_{m,n}$ of the bandwidth as $T^{\text{[tr]}}_{m,n}$, whose cumulative distribution function (CDF) is given by:
\begin{align}
&\mathbb{P}\left[T^{\text{[tr]}}_{m,n}(l_{m,n},b_{m,n};\gamma_{m,n})\leq t\right] \nonumber\\
&=
1-e^{-\frac{b_{m,n}\gamma_{m,n}}{l_{m,n}}t},~t\geq 0, \forall n\in\mathcal{N}, \forall m\in\mathcal{M}.
\end{align}
At each master, local processing does not need communication, and thus $T^{\text{[tr]}}_{m,0}=0, ~\forall m$.

Following the literature \cite{lee2018speeding,codedhet, hierarchical, zhang2021coded, Wu2020latency}, the delay of computing the inner product of one coded row of $\boldsymbol{\widetilde{A}}_m$ and vector $\x_m$ at worker $n$ or master $m$ ($n=0$) is modeled by a shifted exponential distribution, with shift parameter $a_{m,n}$ and rate parameter $u_{m,n}$. For $\forall n\in\mathcal{N}', ~\forall m\in\mathcal{M}$, define the total computation delay of $\boldsymbol{\widetilde{A}}_{m,n}\x_m$ as $T^{\text{[cp]}}_{m,n}$, with CDF
\begin{align} \label{Tcp_cdf}
\mathbb{P}\left[T^{\text{[cp]}}_{m,n}(l_{m,n},k_{m,n};a_{m,n},u_{m,n})\leq t\right]
~~~~~~~~~~~~~~~~~~ \nonumber\\
=
\begin{cases}
&1-e^{-\frac{k_{m,n}u_{m,n}}{l_{m,n}}\left(t-\frac{a_{m,n}l_{m,n}}{k_{m,n}}\right)},~t\geq \frac{a_{m,n}l_{m,n}}{k_{m,n}},\\
&0, ~~ \text{otherwise}.
\end{cases}
\end{align}

Let $T_{m,n}\triangleq T^{\text{[tr]}}_{m,n}+T^{\text{[cp]}}_{m,n}$ be the total communication plus computation delay of the task assigned from master $m$ to worker $n$, where $T^{\text{[tr]}}_{m,n}$ and $T^{\text{[cp]}}_{m,n}$ are two independent random variables. Then, if $b_{m,n}\gamma_{m,n}\neq k_{m,n}u_{m,n}$ and $t \geq \frac{a_{m,n}l_{m,n}}{k_{m,n}}$, the CDF of $T_{m,n}$ is given as follows:
\begin{flalign} \label{Tmn_cdf}
\mathbb{P} \! \left[T_{m,n}\leq t\right] 
&\!= \! 
1\! -\! 
\frac{b_{m,n}\gamma_{m,n}e^{-\frac{k_{m,n}u_{m,n}}{l_{m,n}}\left(t-\frac{a_{m,n}l_{m,n}}{k_{m,n}}\right)}}{b_{m,n}\gamma_{m,n}-k_{m,n}u_{m,n}} \nonumber\\
&+
 \frac{k_{m,n}u_{m,n}e^{-\frac{b_{m,n}\gamma_{m,n}}{l_{m,n}}\left(t-\frac{a_{m,n}l_{m,n}}{k_{m,n}}\right)}}{b_{m,n}\gamma_{m,n}-k_{m,n}u_{m,n}}.
\end{flalign}
%\begin{flalign} \label{Tmn_cdf}
%\mathbb{P} \! \left[T_{m,n}\leq t\right] \!\!= \!\! \nonumber\\
%\begin{cases}
%&\!\!\!\!\!\! 1\! -\! \frac{b_{m,n}\gamma_{m,n}e^{-\frac{k_{m,n}u_{m,n}}{l_{m,n}}\left(t-\frac{a_{m,n}l_{m,n}}{k_{m,n}}\right)}-k_{m,n}u_{m,n}e^{-\frac{b_{m,n}\gamma_{m,n}}{l_{m,n}}\left(t-\frac{a_{m,n}l_{m,n}}{k_{m,n}}\right)}}{b_{m,n}\gamma_{m,n}-k_{m,n}u_{m,n}}, 
%t\!\! \geq \!\!\frac{a_{m,n}l_{m,n}}{k_{m,n}},  \\
%&\!\!\!\!\!\! 0, ~~ \text{otherwise}.
%\end{cases}
%\end{flalign}
If $b_{m,n}\gamma_{m,n}=k_{m,n}u_{m,n}$ and $t\! \geq \!\frac{a_{m,n}l_{m,n}}{k_{m,n}}$, the CDF of $T_{m,n}$ is
\begin{flalign} \label{Tmn_cdf_eq}
&\mathbb{P}  \left[T_{m,n}\leq t\right] = 
 1 -  \nonumber\\
&\!\!\left[1\!\!+\!\!\frac{k_{m,n}u_{m,n}}{l_{m,n}}\!\!\left(\!t\!-\!\frac{a_{m,n}l_{m,n}}{k_{m,n}}\!\right)\!\right] 
\! e^{\!-\!\frac{k_{m,n}u_{m,n}}{l_{m,n}}\left(t-\frac{a_{m,n}l_{m,n}}{k_{m,n}}\right)}.
\end{flalign}
%\begin{flalign} \label{Tmn_cdf_eq}
%\mathbb{P}  \left[T_{m,n}\leq t\right] = 
%\begin{cases}
%&\!\!\!\!\!\! 1 -  \left[1+\frac{k_{m,n}u_{m,n}}{l_{m,n}}\left(t-\frac{a_{m,n}l_{m,n}}{k_{m,n}}\right)\right]  e^{-\frac{k_{m,n}u_{m,n}}{l_{m,n}}\left(t-\frac{a_{m,n}l_{m,n}}{k_{m,n}}\right)}, t\! \geq \!\frac{a_{m,n}l_{m,n}}{k_{m,n}},  \\
%&\!\!\!\!\!\! 0, ~~ \text{otherwise}.
%\end{cases}
%\end{flalign}
Otherwise, if $t<\frac{a_{m,n}l_{m,n}}{k_{m,n}}$, $\mathbb{P}  \left[T_{m,n}\leq t\right] = 0$.

For local computation, we have $T_{m,0}=T^{\text{[cp]}}_{m,0}, ~\forall m$. When $t\geq a_{m,0}l_{m,0}$, the CDF is given by 
\begin{align}\label{Tm0_cdf}
\mathbb{P}\left[T_{m,0}\leq t\right]=
1-e^{-\frac{u_{m,0}}{l_{m,0}}\left(t-a_{m,0}l_{m,0}\right)},~t\geq a_{m,0}l_{m,0},
\end{align}
otherwise, $\mathbb{P}\left[T_{m,0}\leq t\right]=0$.
%\begin{align}\label{Tm0_cdf}
%\mathbb{P}\left[T_{m,0}\leq t\right]=
%\begin{cases}
%&1-e^{-\frac{u_{m,0}}{l_{m,0}}\left(t-a_{m,0}l_{m,0}\right)},~t\geq a_{m,0}l_{m,0},\\
%&0, ~~ \text{otherwise}.
%\end{cases}
%\end{align}

\subsection{Problem Formulation}
Our objective is to minimize the task completion delay, by jointly optimizing the allocation of task load $\l\triangleq\{l_{m,n}|n\in\mathcal{N}', m\in\mathcal{M}\}$, computing power $\k\triangleq\{k_{m,n}|  n\in\mathcal{N}, m\in\mathcal{M}\}$, and communication bandwidth $\b\triangleq\{b_{m,n}|n\in\mathcal{N}, m\in\mathcal{M}\}$. As the communication and computation delays are with random, we aim to minimize the delay $t$, upon which the probability that all the masters can recover their computations is higher than a given threshold $\rho_s$. The optimization problem is formulated as:
\begin{subequations}
	\begin{align}
	\mathcal{P}1: \!\!\min_{\{\l,\k,\b,t\}} &~~~~t \label{ori_obj1} \\
	\text{s.t.} ~~&\mathbb{P}\left[X_m(t)\geq L_m\right]\geq \rho_s, ~ \forall m\in\mathcal{M},  \label{ori_cons_exp1} \\
	&\sum_{m=1}^{M} k_{m,n}\leq1, \sum_{m=1}^{M} b_{m,n}\leq1, \forall n\in\mathcal{N},   \label{ori_cons_sum1} \\
	& k_{m,n}\in \mathcal{K}, b_{m,n} \in \mathcal{K}, \forall m\in\mathcal{M},\forall n\in\mathcal{N}, \label{ori_cons_kb} \\
	& l_{m,n} \in \mathbb{N},~~\forall m\in\mathcal{M},~\forall n\in\mathcal{N}'. \label{ori_cons_l}
	\end{align}
\end{subequations}
In constraint \eqref{ori_cons_exp1}, $X_m(t)$ is defined as the number of computation results that can be received by the master $m$ until time $t$, where a unit result refers to the inner product of one coded row of $\boldsymbol{\widetilde{A}}_m$ and vector $\x_m$. Constraint \eqref{ori_cons_exp1} guarantees that each task can be recovered with probability $\rho_s$.
Equation \eqref{ori_cons_sum1} is the resource allocation constraint of each worker.
In constraint \eqref{ori_cons_kb}, we have $\mathcal{K}=\{0,1\}$ for dedicated worker assignment, while $\mathcal{K}=[0,1]$ for fractional worker assignment.
In constraint \eqref{ori_cons_l}, $\mathbb{N}$ represents the set of non-negative integers.

Since workers have heterogeneous computing and communication capabilities, their loads $l_{m,n}$ will be different in general. To derive $\mathbb{P}\left[X_m(t)\geq L_m\right]$, we need to find all the combinations of $\{l_{m,n},n\in\mathcal{N}'\}$ that satisfy $X_m(t)\geq L_m$, and further derive their joint probability distributions, which is intractable. As a result, problem $\mathcal{P}1$ is hard to solve.

We thus consider an approximation of $\mathcal{P}1$, where the probability constraint \eqref{ori_cons_exp1} is substituted by an expectation constraint, shown as follows:
\begin{subequations}
	\begin{align}
	\mathcal{P}2: \min_{\{\l,~\k,~\b,~t\}} &~~~~t \label{p1_obj1} \\
	\text{s.t.} ~~~~&\mathbb{E}[X_m(t)]\geq L_m, ~ \forall m\in\mathcal{M},  \label{cons_exp1} \\
	&l_{m,n} \geq 0,~~\forall m\in\mathcal{M},~\forall n\in\mathcal{N}, \label{ori_cons_lcont} \\
	&\text{Constraints } \eqref{ori_cons_sum1}, \eqref{ori_cons_kb}. \nonumber
	\end{align}
\end{subequations}
Constraint \eqref{cons_exp1} states that master $m$ is expected to receive sufficient computation results to recover $\A_m\x_m$ until time $t$. Similar approximation approach is also used in \cite{codedhet,zhang2021coded,Wu2020latency}, and the performance gap under a single master case can be bounded \cite{codedhet}. As $\A_m$ is with high dimension and thus the non-zero $l_{m,n}$ are typically large, we further relax $l_{m,n} \in \mathbb{N}$ to $l_{m,n} \geq 0$ in \eqref{ori_cons_lcont}, and ignore the rounding error in the following.

To simplify the system workflow as well as the theoretical analysis, we assume that each encoded task $\boldsymbol{\widetilde{A}}_{m,n}\x_m$, either being processed locally or allocated to a worker, is processed as a whole without any further partition. Accordingly, each master can only receive $l_{m,n}$ computation results from node $n\in\mathcal{N}'$ upon the completion. As computations on workers are independent, $\mathbb{E}[X_m(t)]$ can be written as follows:
\begin{align}\label{x_mn_def2}
\mathbb{E}[X_{m}(t)]=\sum_{n=0}^{N}\mathbb{E}\left[l_{m,n} \mathbb{I}_{\left\{T_{m,n}\leq t \right\}}\right] 
=\sum_{n=0}^{N}l_{m,n} \mathbb{P}\left[T_{m,n}\leq t\right], \nonumber
\end{align}
where $\mathbb{I}_{\left\{x\right\}}$ denotes the indicator function with $\mathbb{I}_{\left\{x\right\}}=1$ if event $x$ is true, and $\mathbb{I}_{\left\{x\right\}}=0$ otherwise.
For $n\in\mathcal{N}$, $\mathbb{P}\left[T_{m,n}\leq t\right]$ is given in \eqref{Tmn_cdf} or \eqref{Tmn_cdf_eq}, and for $n=0$, $\mathbb{P}\left[T_{m,n}\leq t\right]$ is given in \eqref{Tm0_cdf}.

%Since $\mathcal{P}1$ is not tractable, we will design its approximate solutions under dedicated and fractional worker assignments by solving $\mathcal{P}2$ in the following two sections. 
%In Section \ref{sim}, we will further test the delay performance of $\mathcal{P}1$ through simulations.

In the following two sections, we design solutions to $\mathcal{P}2$ under dedicated and fractional worker assignments, respectively. We will further show in Section \ref{sim} that a good solution to $\mathcal{P}2$ can also achieve low delay under the constraints of $\mathcal{P}1$.

\section{Dedicated Worker Assignment} \label{sec_dedi}

In this section, we solve problem $\mathcal{P}2$ under the dedicated worker assignment policy, where $\mathcal{K}=\{0,1\}$ and $b_{m,n}=k_{m,n}$. Accordingly, problem $\mathcal{P}2$ is a non-convex MINLP, which is very challenging to solve in general. 

We decouple the binary worker assignment variable $\k$ and the continuous load allocation variable $\l$ to seek a solution. 
First, given any worker assignment decision, the load allocation problem is still non-convex. We use Markov's inequality to provide a convex approximation to the non-convex constraint, and derive the optimal load allocation for this sub-problem. We also show that, when either the computation or communication delay plays a leading role, the original load allocation problem is convex, and the optimal solution can be derived.
Then, based on the optimal load allocation, we transform the worker assignment problem into a max-min allocation problem, which is still NP-hard and thus solved with greedy heuristics.
Finally, after optimizing the worker assignment, we further provide an enhanced load allocation algorithm by solving the original non-convex problem with the SCA method.

\subsection{Load Allocation for the General Case} \label{sub-load}

Given the set of workers $\Omega_m=\{n|k_{m,n}=1, n\in\mathcal{N}\}$ that serve master $m$, the optimal load allocation problem aims to minimize the task completion delay $t_m$ for each master $m$:
\begin{subequations}
	\begin{align}
	\mathcal{P}3: \min_{\{\l_m,~t_m\}} &~~~~t_m \\
	\text{s.t.} ~~&~\mathbb{E}[X_m(t_m)]\geq L_m,  \label{con_dedi}\\
	&~l_{m,n}\geq 0 , ~\forall n\in \Omega_m',
	\end{align}
\end{subequations}
where $\Omega_m'\triangleq \Omega_m\cup\{0\}$ includes the master $m$ itself, and $\l_m=\{l_{m,n}|n\in \Omega_m'\}$ denotes the load allocation vector. For $n\in\Omega_m$, the CDF of the total delay $T_{m,n}$ under dedicated assignment can be obtained by letting $k_{m,n}=1$ and $b_{m,n}=1$ in \eqref{Tmn_cdf} and \eqref{Tmn_cdf_eq}.
%\begin{align} \label{Tmn_cdf1}
%\mathbb{P}\left[T_{m,n}\leq t\right] = ~~~~~~~~~~~~~~~ \nonumber
%\end{align}
%\begin{align}
%\begin{cases}
%&\!\!\!\!\!\!1 \!- \!\frac{\gamma_{m,n}e^{-\frac{u_{m,n}}{l_{m,n}}\left(t-a_{m,n}l_{m,n}\right)}- u_{m,n}e^{-\frac{\gamma_{m,n}}{l_{m,n}}\left(t-a_{m,n}l_{m,n}\right)}}{\gamma_{m,n}-u_{m,n}}, \\ 
%&~~~~~~~~~~~ \gamma_{m,n}\neq u_{m,n}, t \!\geq\!a_{m,n}l_{m,n},\\
%&\!\!\!\!\!\! 1 \! - \! \left[1+\frac{u_{m,n}}{l_{m,n}}\left(t-a_{m,n}l_{m,n}\right)\right]  e^{-\frac{u_{m,n}}{l_{m,n}}\left(t-a_{m,n}l_{m,n}\right)}, \\
%&~~~~~~~~~~~ \gamma_{m,n}= u_{m,n}, t \!\geq\! a_{m,n}l_{m,n}, \\
%&\!\!\!\!\!\! 0, ~~ \text{otherwise}.
%\end{cases}
%\end{align}
Accordingly, $\mathbb{E}[X_m(t_m)]=\sum_{n\in \Omega_m'}l_{m,n}\mathbb{P}\left[T_{m,n}\leq t_m\right]$ is a non-convex function, making problem $\mathcal{P}3$ hard to solve. 

We provide an approximation to $\mathbb{E}[X_m(t_m)]$ based on Markov's inequality, i.e., for $n\in\Omega_m$,
\begin{align}
	\mathbb{P}\!\left[T_{m,n}\!\geq\! t_m \right] \!\leq \! \frac{E[T_{m,n}]}{t_m}
	\!=\!\frac{l_{m,n}}{t_m}\!\left(\!\frac{1}{\gamma_{m,n}} \!+\! \frac{1}{u_{m,n}}\!+\!a_{m,n}  \! \right) \!.
\end{align}
At the master, $\mathbb{P}\left[T_{m,0}\geq t_m\right]\leq \frac{l_{m,0}}{t_m}\left({\frac{1}{u_{m,0}}+a_{m,0} }\right)$.
Let
\begin{align} \label{def_theta}
	\theta_{m,n}\triangleq \frac{1}{\gamma_{m,n}}+\frac{1}{u_{m,n}}+a_{m,n}, ~\theta_{m,0}\triangleq\frac{1}{u_{m,0}}+a_{m,0}.
\end{align}
Then we have
\begin{align}\label{ineq_markov}
	\mathbb{E}[X_m(t_m)] &=\sum_{n\in \Omega_m'}l_{m,n}\mathbb{P}\left[T_{m,n}\leq t_m\right]  \nonumber\\
	&\geq \sum_{n\in\Omega_m'}l_{m,n}\left(1-\frac{\theta_{m,n} l_{m,n}}{t_m}\right).
\end{align}

Substituting inequality \eqref{ineq_markov} into \eqref{con_dedi}, we obtain a tighter constraint, and an approximation to $\mathcal{P}3$ is given by
\begin{subequations}
	\begin{align}
	\mathcal{P}4: \min_{\{\l_m,t_m\}} &~~~~t_m \\
	\text{s.t.} ~~~&~\sum_{n\in\Omega_m'}l_{m,n}\left(1-\frac{\theta_{m,n} l_{m,n}}{t_m}\right)\geq L_m,  \label{con_dedi_approx}\\
	&~l_{m,n}\geq 0, ~\forall n\in \Omega_m'. 
	\end{align}
\end{subequations}

Problem $\mathcal{P}4$ is a convex optimization problem, and the optimal solution is given as follows.

\begin{theorem} \label{markov_load}
	For a given subset of workers $\Omega_m$ that serves a master $m\in\mathcal{M}$, the optimal load allocation $\l_m^*$ and the corresponding task completion delay $t_m^*$ to $\mathcal{P}4$ are
	\begin{subequations}
		\begin{align}
		&l_{m,n}^*=\frac{L_m}{\theta_{m,n}\sum_{n\in\Omega_m'} \frac{1}{2\theta_{m,n}}},~n\in\Omega_m', \label{markov_l}  \\
		~~&t_m^*=\frac{L_m}{\sum_{n\in\Omega_m'} \frac{1}{4\theta_{m,n}}}. \label{markov_t}
		\end{align}
	\end{subequations}
\end{theorem}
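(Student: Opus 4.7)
The plan is to exploit the separable structure of problem $\mathcal{P}4$ and treat it as a two-stage optimization: for each fixed candidate value of $t_m > 0$ I would first solve the inner maximization over $\l_m$ of the left-hand side of \eqref{con_dedi_approx}, and then choose the smallest $t_m$ for which this maximum reaches the threshold $L_m$. Before doing so I would briefly verify that $\mathcal{P}4$ is a convex program, so that any KKT point is globally optimal: the map $(l, t) \mapsto l^2/t$ is the perspective of the convex function $l \mapsto l^2$ and is therefore jointly convex for $t > 0$, which makes the left-hand side of \eqref{con_dedi_approx} jointly concave in $(\l_m, t_m)$; the feasible set is thus convex and the objective is linear.

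The inner maximization $\max_{l_{m,n} \geq 0} \sum_{n \in \Omega_m'} l_{m,n}(1 - \theta_{m,n} l_{m,n}/t_m)$ is separable across $n$, with each summand a concave quadratic in $l_{m,n}$ and unique unconstrained maximizer $l_{m,n} = t_m/(2\theta_{m,n})$. Since $\theta_{m,n} > 0$ this value is strictly positive, so the non-negativity constraints are inactive and the inner maximum equals $t_m \sum_{n \in \Omega_m'} 1/(4\theta_{m,n})$, a linear function of $t_m$. The outer problem then reduces to solving $t_m \sum_{n \in \Omega_m'} 1/(4\theta_{m,n}) \geq L_m$ for the smallest $t_m$, which holds with equality at the claimed $t_m^*$. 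Back-substitution into $l_{m,n} = t_m^*/(2\theta_{m,n})$, combined with the identity $2\cdot 1/(4\theta_{m,n}) = 1/(2\theta_{m,n})$, yields the stated closed form for $l_{m,n}^*$.

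An equivalent route would be to write the Lagrangian of $\mathcal{P}4$ with a single multiplier $\lambda$ on the expectation constraint, observe that this constraint must be active at optimum (otherwise $t_m$ could be strictly decreased), and then solve stationarity in $l_{m,n}$ and in $t_m$ jointly; both routes give the same answer. The main obstacle I foresee is not conceptual but a matter of careful bookkeeping: I must justify that every $l_{m,n}^*$ ends up strictly positive (so that the $l_{m,n} \geq 0$ multipliers all vanish and the interior maximizer applies), and confirm that no boundary solution with some $l_{m,n}^* = 0$ yields a smaller $t_m$. Both points follow from the fact that the separable concave inner objective has a unique interior maximizer for every $t_m > 0$ whenever $L_m > 0$ and all $\theta_{m,n}$ are finite and positive.
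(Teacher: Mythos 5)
Your proof is correct, and the closed forms you obtain match \eqref{markov_l} and \eqref{markov_t} exactly. Your primary route, however, differs from the paper's: the paper writes the Lagrangian of $\mathcal{P}4$ with a single multiplier $\lambda_m$ on \eqref{con_dedi_approx} and solves the full KKT system (stationarity in each $l_{m,n}$ and in $t_m$, plus complementary slackness) directly, whereas you use a two-stage parametric argument --- for each fixed $t_m>0$ you maximize the separable concave quadratic $\sum_{n\in\Omega_m'} l_{m,n}\bigl(1-\theta_{m,n}l_{m,n}/t_m\bigr)$ over $\l_m\geq 0$ term by term, obtaining the value $t_m\sum_{n\in\Omega_m'}\tfrac{1}{4\theta_{m,n}}$, and then pick the smallest $t_m$ for which this reaches $L_m$. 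Your decomposition buys something concrete: since the inner maximum is increasing in $t_m$, feasibility of $\mathcal{P}4$ at a given $t_m$ is characterized exactly, so minimality of $t_m^*$ and the inactivity of the nonnegativity constraints (each interior maximizer $t_m/(2\theta_{m,n})$ is strictly positive) are established without any bookkeeping on multiplier signs or on whether the constraint is active; the KKT route in the paper, by contrast, implicitly relies on convexity (which both arguments verify via the perspective function $l^2/t$) to promote a stationary point to a global optimum. The two derivations are of comparable length and yield identical conclusions; yours is marginally more self-contained, while the paper's Lagrangian computation is reused as the template for Theorems \ref{opt_load} and \ref{opt_cond}.
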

\begin{proof}
	See Appendix \ref{a1}.
\end{proof}

As shown in \eqref{def_theta},  $\theta_{m,n}$ represents the expected total delay for worker $n$ to handle a unit coded task of master $m$, and thus $\frac{1}{\theta_{m,n}}$ indicates the average communication plus computation rate. As shown in Theorem 1, the optimal load allocated to each worker $n$ is proportional to $\frac{1}{\theta_{m,n}}$, while inversely proportional to the overall communication plus computation rates of workers. 

\subsection{Load Allocation for the Computation Delay Dominant Case} \label{sub-load-comp}

When computation delay is much larger than the communication delay, we ignore the latter and get $T_{m,n}=T^{\text{[cp]}}_{m,n}, \forall n\in\mathcal{N}'$. The CDF of $T_{m,n}$ is given in \eqref{Tcp_cdf}. It is easy to see that the optimal solution of $\mathcal{P}3$ must satisfy $t_m^*> \max_{\{n\in\Omega_m'\}}\{a_{m,n}l_{m,n}^*\}$.
In fact, if there is a worker $n_0\in\Omega_m'$ such that $t_m^*\leq a_{m,n_0}l_{m,n_0}$, then $l_{m,n_0}\mathbb{P}\left[T_{m,n_0}\leq t_m^*\right] =0$, meaning that the master $m$ cannot expect to obtain the computation results from worker $n_0$. By reducing $l_{m,n_0}$ to satisfy $t_m^*> a_{m,n_0}l_{m,n_0}$, constraint \eqref{con_dedi} can be strictly satisfied, and thus $t_m^*$ can be further reduced.

Based on this observation, constraint \eqref{con_dedi} of $\mathcal{P}3$ can be written as
\begin{align}
	\mathbb{E}[X_m(t_m)]&=\sum_{n\in \Omega_m'}l_{m,n}\mathbb{P}\left[T_{m,n}\leq t_m\right]  \nonumber\\
	&=\sum_{n\in \Omega_m'}l_{m,n} \left( 1-e^{-\frac{u_{m,n}}{l_{m,n}}\left(t_m-a_{m,n}l_{m,n}\right)} \right).
\end{align}

The following theorem provides the optimal solution to $\mathcal{P}3$.

\begin{theorem} \label{opt_load}
	When computation delay dominates the total delay, $\mathcal{P}3$ is a convex optimization problem, and the optimal load allocation $\l_m^*$ and task completion delay $t_m^*$ are
	\begin{subequations}
		\begin{align}
		&l_{m,n}^*=\frac{L_m}{\phi_{m,n}\sum_{n\in \Omega_m'} \frac{u_{m,n}}{1+u_{m,n}\phi_{m,n}}},~n\in\Omega_m',  \label{opt_l}\\
		~~&t_m^*= \frac{L_m}{\sum_{n\in \Omega_m'} \frac{u_{m,n}}{1+u_{m,n}\phi_{m,n}}}, \label{opt_t}
		\end{align}
	\end{subequations}
	where $\phi_{m,n}\triangleq \frac{1}{u_{m,n}}\left[-\mathcal{W}_{-1}(-e^{-u_{m,n}a_{m,n}-1} )-1\right]$, and $\mathcal{W}_{-1}(x)$ denotes the lower branch of Lambert W function, with $x\leq -1$ and $\mathcal{W}_{-1}(xe^x)=x$.
\end{theorem}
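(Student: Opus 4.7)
The plan is to exploit the fact that the computation-dominant constraint has a perspective-function structure, so that $\mathcal{P}3$ is convex, and then to extract a decoupled per-worker stationarity equation from the KKT conditions which can be inverted via the lower branch of the Lambert $W$ function.

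\emph{Convexity (Step 1).} I would rewrite the left-hand side of \eqref{con_dedi} as $\sum_{n\in\Omega_m'}\bigl(l_{m,n}-e^{u_{m,n}a_{m,n}}\,l_{m,n}e^{-u_{m,n}t_m/l_{m,n}}\bigr)$. Each summand $l_{m,n}e^{-u_{m,n}t_m/l_{m,n}}$ is the perspective of the convex scalar function $t\mapsto e^{-u_{m,n}t}$, hence jointly convex in $(l_{m,n},t_m)$ on $l_{m,n}>0$. Therefore the whole expression is jointly concave, \eqref{con_dedi} carves out a convex set, and combined with the linear objective the problem is convex. The paragraph immediately preceding the theorem already establishes $t_m^*>a_{m,n}l_{m,n}^*$ at optimum, so we stay in the smooth branch of the CDF throughout.

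\emph{KKT and Lambert $W$ (Steps 2--3).} With a positive multiplier $\lambda$ for the (necessarily active) delay constraint and $\mu_n\ge 0$ for the sign constraints, I would compute $\partial\mathcal{L}/\partial t_m$ and $\partial\mathcal{L}/\partial l_{m,n}$. Setting $z_{m,n}\triangleq u_{m,n}(t_m^*/l_{m,n}^*-a_{m,n})$ and $\mu_n=0$ on every assigned worker, the $l_{m,n}$-derivative condition collapses to the strikingly clean
\begin{equation*}
e^{z_{m,n}}=1+z_{m,n}+u_{m,n}a_{m,n},
\end{equation*}
which depends on the optimizers only through the ratio $\phi_{m,n}\triangleq t_m^*/l_{m,n}^*$. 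Setting $y_{m,n}=-(1+u_{m,n}\phi_{m,n})$ and multiplying through by $-e^{-1-u_{m,n}a_{m,n}}$ yields $y_{m,n}e^{y_{m,n}}=-e^{-u_{m,n}a_{m,n}-1}$; since $u_{m,n}\phi_{m,n}>0$ forces $y_{m,n}<-1$, the relevant inversion is the lower Lambert branch, giving the claimed expression for $\phi_{m,n}$.

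\emph{Closing and main obstacle.} Substituting $l_{m,n}^*=t_m^*/\phi_{m,n}$ together with the identity $e^{-z_{m,n}}=1/(1+u_{m,n}\phi_{m,n})$ (read off directly from the stationarity equation) into the active constraint reduces its left-hand side to $t_m^*\sum_{n\in\Omega_m'}u_{m,n}/(1+u_{m,n}\phi_{m,n})$; equating with $L_m$ delivers \eqref{opt_t}, and \eqref{opt_l} follows at once. I expect the main obstacle to be Step~2: it is not a priori obvious that the $|\Omega_m'|+1$ stationarity conditions should decouple into a single per-worker equation in $\phi_{m,n}$, and the cancellation of $\lambda$ against the $l_{m,n}$-derivative of the perspective term has to be carried out carefully. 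A secondary subtlety is the branch selection $\mathcal{W}_{-1}$ versus $\mathcal{W}_0$, which must be pinned down using the physical requirement $z_{m,n}>0$, with the inclusion $-e^{-u_{m,n}a_{m,n}-1}\in(-1/e,0)$ confirming that $\mathcal{W}_{-1}$ is well defined on the relevant argument.
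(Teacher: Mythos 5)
Your proposal is correct and follows essentially the same route as the paper's Appendix~\ref{a2}: establish convexity of $\mathcal{P}3$ in the computation-dominant case, write the KKT conditions, observe that the per-worker stationarity condition depends only on the ratio $t_m^*/l_{m,n}^*$ and reduces to a transcendental equation solved by $\mathcal{W}_{-1}$ (with the same branch-selection argument via $-e^{-u_{m,n}a_{m,n}-1}\in(-1/e,0)$ and $y_{m,n}<-1$), and then substitute into the active constraint to obtain \eqref{opt_t} and \eqref{opt_l}. The only cosmetic difference is that you certify joint convexity of each summand via the perspective-function characterization of $l_{m,n}e^{-u_{m,n}t_m/l_{m,n}}$, whereas the paper computes the Hessian of each summand explicitly and checks that its eigenvalues are nonnegative.
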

\begin{proof}
	See Appendix \ref{a2}.
\end{proof}

%[XXX Some further comments]  Comparing Theorem \ref{opt_load} with Theorem \ref{markov_load}, we can see that the expressions $l_{m,n}^*$ and $t_m^*$ have similar structures. Specifically, in \eqref{opt_l}

Similar results can be derived for the communication delay dominant case, by substituting $u_{m,n}$ with $\gamma_{m,n}$ and letting $a_{m,n}=0$.

\subsection{Dedicated Worker Assignment Algorithms} \label{sub-assign}

In this subsection, we design worker assignment algorithms, aiming to assign workers to masters in a balanced manner and minimize the completion delay of the slowest task.

According to Theorem \ref{markov_load}, the minimum task completion delay that can be achieved under a given subset of workers is
\begin{align}
	t_m^*=\frac{L_m}{\sum_{n\in\Omega_m'} \frac{1}{4\theta_{m,n}}}=\frac{L_m}{\frac{1}{4\theta_{m,0}}+\sum_{n=1}^{N} \frac{k_{m,n}}{4\theta_{m,n}}},
\end{align}
where we recall that $k_{m,n}\in\{0,1\}$ is the worker assignment indicator. 

From $\mathcal{P}2$, the objective of worker assignment is $\min_{\k} \max _{m\in\mathcal{M}}t_m^*$. As $t_m^*>0, \forall m$, the objective is equivalent to $\max_{\k} \min _{m\in\mathcal{M}}\frac{1}{t_m^*}$. Let $v_{m,n}\triangleq \frac{1}{4L_m\theta_{m,n}}, \forall m\in\mathcal{M},\forall n\in\mathcal{N}'$, and thus 
\begin{align} \label{frac_1_tm}
	\frac{1}{t_m^*}\!=\!\frac{1}{L_m}\!\left[\!\frac{1}{4\theta_{m,0}}\!+\!\sum_{n=1}^{N} \frac{k_{m,n}}{4\theta_{m,n}} \! \right]\!
	=\!v_{m,0} \!+ \!\sum_{n=1}^{N} k_{m,n}v_{m,n}.
\end{align}
The worker assignment problem can be transformed into the following form:
\begin{subequations}
	\begin{align}
	\mathcal{P}5: \min_{\k} &~\max _{m\in\mathcal{M}}~v_{m,0}+ \sum_{n=1}^{N} k_{m,n}v_{m,n} \label{obj_assign} \\
	\text{s.t.} &~~\sum_{m=1}^{M}k_{m,n}\leq 1, ~\forall n\in\mathcal{N}, \\
	&~~k_{m,n} \in \{0,1\} ,~~\forall m\in\mathcal{M},~\forall n\in\mathcal{N}. 
	\end{align}
\end{subequations}
Note that, for the computation delay dominant case, we only need to set $v_{m,n}=\frac{u_{m,n}}{L_m(1+u_{m,n}\phi_{m,n})}$, while the rest of the derivation still holds.

\begin{algorithm} [htb] 
	\caption{Iterated Greedy Algorithm for Dedicated Worker Assignment} \label{algo_iter}
	\begin{algorithmic}[1]
		\State \textbf{Input}: Let $v_{m,n}=\frac{1}{4L_m\theta_{m,n}}, \forall m,n$, $V_m=v_{m,0}, \forall m$ and $\Omega_m=\emptyset, \forall m$.
		\For {$n=1,...,N$}  \Comment{\textit{Initialization}}
		\State Find $m^*=\arg\max_{m\in \mathcal{M}} v_{m,n}$, and
		update $V_{m^*}=V_{m^*}+v_{m^*,n}$, $\Omega_{m^*}=\Omega_{m^*} \cup \{n\}$.
		\EndFor
		\While {termination condition is not satisfied}    \Comment{\textit{Main iteration}}
		\For {$n=1,... ,N$}    \Comment{\textit{Insertion}}
		\State Let $m_1$ be master that worker $n$ is serving, and $m_2=\arg \min_{m\in \mathcal{M}/\{m_1\}} V_{m}$.
		\State $V'_{m_1}=V_{m_1}-v_{m_1,n}$, $V'_{m_2}=V_{m_2}+v_{m_2,n}$, and $V'_{m}=V_{m}, \forall m \in\mathcal{M}/\{m_1,m_2\}$.
		\If {$\min_ {m\in \mathcal{M}}V'_{m}>\min_ {m\in \mathcal{M}}V_{m}$}
		\State $\Omega_{m_1}=\Omega_{m_1}-\{n\}$, $\Omega_{m_2}=\Omega_{m_2}+\{n\}$.
		\EndIf
		\EndFor
		\For {$n_1,n_2=1,... N$, and $n_1 \neq n_2$}    \Comment{\textit{Interchange}}
		\State Masters $m_1, m_2$ served by workers $n_1, n_2$, $V'_{m_1}=V_{m_1}-v_{m_1,n_1}+v_{m_1,n_2}$, and $V'_{m_2}=V_{m_2}-v_{m_2,n_2}+v_{m_2,n_1}$. 
		\If {$m_1 \neq m_2$, $v_{m_1,n_1}+v_{m_2,n_2}<v_{m_1,n_2}+v_{m_2,n_1}$, $V'_{m_1}>V_{\text{min}}$, and $V'_{m_2}>V_{\text{min}}$}
		\State $\Omega_{m_1}=\Omega_{m_1}-\{n_1\}+\{n_2\}$, $\Omega_{m_2}=\Omega_{m_2}-\{n_2\}+\{n_1\}$.
		\EndIf
		\EndFor			   
		\State Randomly remove some workers in $\mathcal{N}_s \in\mathcal{N}$, and update $V_m$ accordingly.  \Comment{\textit{Exploration}}
		\While {$\mathcal{N}_s \neq \emptyset$}
		\State Find $\{m^*,n^*\}=\arg\max_{m\in \mathcal{M}, n\in\mathcal{N}_s } v_{m,n}$.
		\State Update $V_{m^*}=V_{m^*}+v_{m^*,n^*}$, 
		$\Omega_{m^*}=\Omega_{m^*} \cup \{n^*\}$, $\mathcal{N}_s=\mathcal{N}_s-\{n^*\}$.
		\EndWhile
		\EndWhile
	\end{algorithmic}
\end{algorithm}

\begin{algorithm} [htb]
	\caption{Simple Greedy Algorithm for Dedicated Worker Assignment} \label{algo_simple}
	\begin{algorithmic}[1]
		\State \textbf{Input}: $\mathcal{N}_0=\{1,2,...,N\}$, $v_{m,n}=\frac{1}{4L_m\theta_{m,n}}, \forall m,n$, $V_m=v_{m,0}, \forall m$, and $\Omega_m=\emptyset, \forall m$.
		\While {$\mathcal{N}_0 \neq \emptyset$} 
		%		\State Solve the individual optimal worker assignment problem \textbf{P3}, and obtain $t_m^*$ for $m\in \mathcal{M}$.
		\State Find $m^*=\arg\min_{m\in \mathcal{M}}V_m$.
		\State Find $n^*=\arg\max_{ n\in\mathcal{N}_0 } v_{m^*,n}$.
		\State $V_{m^*}=V_{m^*}+v_{m^*,n^*}$.
		\State $\Omega_m=\Omega_m \cup \{n^*\}$, $\mathcal{N}_0=\mathcal{N}_0-\{n^*\}$.
		\EndWhile
	\end{algorithmic}
\end{algorithm}

Problem $\mathcal{P}5$ is called \emph{max-min allocation} problem, which is proposed for the fair assignment of items \cite{chakrabarty2009on,asadpour2010an}. In the original max-min allocation problem, each of the $N$ items has a unique value for an agent, and can be assigned to one of the $M$ agents. The objective is to assign all the items to the agents as fairly as possible, by maximizing the minimum total value of agents.
In $\mathcal{P}5$, each worker $n$ is an item with value $v_{m,n}$ for master $m$, and each master corresponds to an agent.
The max-min allocation problem can be reduced to the partitioning problem \cite{hayes2002}, when considering only 2 agents and assuming that each item has the same value for each agent. Since the partitioning problem is NP-complete, the max-min allocation problem is NP-hard.

Although some polynomial-time algorithms have been proposed for the max-min allocation problem with worst-case performance guarantee \cite{chakrabarty2009on,asadpour2010an}, they are very complex and difficult to implement. Instead, we propose two greedy algorithms in the following.

Inspired by \cite{peyro2010iterated}, an iterated greedy algorithm is proposed, as shown in Algorithm \ref{algo_iter}.
In the initialization phase, we assign each worker to the master with highest $v_{m,n}$, in order to maximize the contribution of workers.
Then, we iterate among the insertion, interchange, and exploration phases, until the termination condition is met. To be specific, in the insertion phase, each worker is re-assigned to a master $m_2$ with the minimum sum value $V_{m_2}$ if the minimum sum value of the masters is improved.
In the interchange phase, any two workers exchange the masters they are serving, if the minimum sum values of both masters, and the total value of the workers are improved. In the exploration phase, a subset of workers are randomly removed from the current assignment, and allocated to the masters in a greedy manner.
If the number of iterations reaches a preset maximum value, or the minimum sum value of the masters does not improve any more, the iteration is terminated. Note that, the final output is the worker assignment after the interchange phase.

As shown in Algorithm \ref{algo_simple}, we also propose a simple greedy algorithm that does not require iterations for performance improvement, inspired by the largest-value-first algorithm \cite{bryan1982scheduling}. The initial value of each master is related to its local computation capability, given by $V_m=v_{m,0} $. During the main loop, we select a master $m$ whose current sum value is the minimum, and allocate an available worker $n$ with highest $v_{m,n}$ for master $m$. The algorithm terminates when all the workers are allocated.

\subsection{SCA-Enhanced Load Allocation} \label{sub-SCA}

The main purpose of using Markov's inequality for load allocation in the general case is to provide an explicit form for the worker assignment problem. After that, we can get back to the original load allocation problem $\mathcal{P}3$ to further improve the performance. We observe that the non-convex constraint \eqref{con_dedi} in $\mathcal{P}3$ has a structure of the difference of convex functions, and thus we implement the SCA method to further optimize the load allocation.

%According to \eqref{Tmn_cdf1}, 
When $\gamma_{m,n}\neq u_{m,n}$, 

\begin{align}\label{EXt_dedi}
	&~\mathbb{E}[X_m(t_m)]=\sum_{n\in \Omega_m'}l_{m,n}\mathbb{P}\left[T_{m,n}\leq t_m\right] \nonumber\\
	&=l_{m,0} \left[1-e^{-\frac{u_{m,0}}{l_{m,0}}\left(t_m-a_{m,0}l_{m,0}\right)} \right] 
	+\sum_{n\in \Omega_m}l_{m,n}\Bigg[1-  \nonumber\\
	& \left. \frac{\gamma_{m,n}e^{-\frac{u_{m,n}}{l_{m,n}}\left(t-a_{m,n}l_{m,n}\right)} \!\!- \!\! u_{m,n}e^{-\frac{\gamma_{m,n}}{l_{m,n}}\left(t-a_{m,n}l_{m,n}\right)}}{\gamma_{m,n}-u_{m,n}}\right].
\end{align} 

Let $\w_m\triangleq \{\l_m,t_m\}$. Without loss of generality, we assume $\gamma_{m,n}>u_{m,n}$, and let 
\begin{align}
	&h_{m,n}^+(\w_m)\triangleq \frac{\gamma_{m,n}l_{m,n} e^{-\frac{u_{m,n}}{l_{m,n}}\left(t-a_{m,n}l_{m,n}\right)}}{\gamma_{m,n}-u_{m,n}}, \nonumber\\
	&h_{m,n}^-(\w_m) \triangleq \frac{u_{m,n}l_{m,n}e^{-\frac{\gamma_{m,n}}{l_{m,n}}\left(t-a_{m,n}l_{m,n}\right)}}{\gamma_{m,n}-u_{m,n}} . \nonumber
\end{align}
%$h_{m,n}^+(\w_m)\triangleq \frac{\gamma_{m,n}l_{m,n} e^{-\frac{u_{m,n}}{l_{m,n}}\left(t-a_{m,n}l_{m,n}\right)}}{\gamma_{m,n}-u_{m,n}}$,
%$h_{m,n}^-(\w_m) \triangleq \frac{u_{m,n}l_{m,n}e^{-\frac{\gamma_{m,n}}{l_{m,n}}\left(t-a_{m,n}l_{m,n}\right)}}{\gamma_{m,n}-u_{m,n}}$.
Otherwise, we can exchange $h_{m,n}^+(\w_m)$ with $h_{m,n}^-(\w_m)$, and the following solution still works. Let $h_{m,0}(\w_m)\triangleq -l_{m,0} \left[1-e^{-\frac{u_{m,0}}{l_{m,0}} \left(t_m-a_{m,0}l_{m,0}\right)} \right]$. From Appendix \ref{a2}, we know that $h_{m,0}(\w_m)$, $h_{m,n}^+(\w_m)$, and $h_{m,n}^-(\w_m)$ are all convex functions. Accordingly,
\begin{align}
	L_m-\mathbb{E}[X_m(t_m)]=L_m-\sum_{n\in \Omega_m}l_{m,n}+h_{m,0}(\w_m)& \nonumber\\
	+\sum_{n\in \Omega_m}\left(h_{m,n}^+(\w_m)-h_{m,n}^-(\w_m)\right),&
\end{align}
that is, $L_m-\mathbb{E}[X_m(t_m)]$ can be decomposed into the difference of convex functions.

For any given point $\z$, a convex upper bound of $h_{m,n}^+(\w)-h_{m,n}^-(\w)$ can be obtained by linearizing $h_{m,n}^-(\w)$:
\begin{align}
	&h_{m,n}^+(\w)-h_{m,n}^-(\w) \nonumber\\
	&\leq h_{m,n}^+(\w)-h_{m,n}^-(\z)-\nabla_{\w}h_{m,n}^-(\z)^{T}(\w-\z).
\end{align} 
Let $\tilde{h}_{m,n}(\w,\z) \triangleq h_{m,n}^+(\w)-h_{m,n}^-(\z)-\nabla_{\w}h_{m,n}^-(\z)^{T}(\w-\z)$. A convex approximation problem to $\mathcal{P}3$ under point $\z$, denoted by $\mathcal{P}(\z)$, is given by 
\begin{subequations}
	\begin{align}
	\mathcal{P}\!(\z): \!\min_{\{\w_m\}} &~~t_m \\
	\text{s.t.} ~~&  h_{m,0}(\w_m) \!+\!\!\!\sum_{n\in \Omega_m}\!\!\tilde{h}_{m,n}(\w_m,\z)\!\leq\!\!\!\! \sum_{n\in \Omega_m}\!\!\! l_{m,n}\!\! -\!\!L_m ,  \label{con_dedi_sca}\\
	&l_{m,n}\geq 0 , ~\forall n\in \Omega_m'.
	\end{align}
\end{subequations}

\begin{algorithm} [!t]
	\caption{SCA-Enhanced Load Allocation Algorithm} \label{algo_SCA}
	\begin{algorithmic}[1]
		\State \textbf{Input}: Given master $m$ and its worker assignment $\Omega_m$, find a feasible point $\z_0$ of $\mathcal{P}3$, and set $\gamma_0=1$, $r=0$, $\alpha \in (0,1)$.
		\While {$\z_r$ is not a stationary solution} 
		\State Solve the optimal solution $\w_r$ to the optimization problem $\mathcal{P}(\z_r)$.
		\State $\z_{r+1}=\z_r+ \gamma_r (\w_r-\z_r)$.
		\State $\gamma_{r+1}=\gamma_r(1-\alpha \gamma_r)$, $r \leftarrow r+1$.
		\EndWhile
	\end{algorithmic}
\end{algorithm}

Based on the SCA method proposed in \cite{scutari2017p1}, we develop an SCA-enhanced load allocation algorithm, as shown in Algorithm \ref{algo_SCA}.
For each master $m$ and the corresponding worker assignment $\Omega_m$ by Algorithm \ref{algo_iter} or \ref{algo_simple}, the SCA algorithm starts from a feasible point of $\z_0$ of $\mathcal{P}3$. Note that, the Markov's inequality provides a tighter approximation to constraint \eqref{con_dedi}, and thus Theorem \ref{markov_load} directly provides $\z_0$. Then, we iteratively solve convex optimization problems $\mathcal{P}(\z_r)$ until convergence, where in the $r$-th iteration, $\z_r$ is updated according to Line 4 using step-size $\gamma_r$. According to \cite{scutari2017p1}, we update $\gamma_r$ with a decreasing ratio $\alpha\in(0,1)$, so as to guarantee the convergence to a local optimum.

As a summary, we would like to provide the following remarks.

\begin{remark}
	Scope of application:
	\emph{While we assumed certain delay distributions in the system model, the Markov's inequality based approximate load allocation and the corresponding worker assignment algorithms, introduced in Section \ref{sub-load} and Section \ref{sub-assign}, \emph{do not} rely on these distributions. Instead, the proposed solution can be applied to any communication and computation delay distributions with broad adaptivity, as long as their mean values are known.
	To further carry out the SCA-enhanced load allocation, we need to specify the delay distributions.}
\end{remark}

\begin{remark}
	Iterated matrix multiplication:
	\emph{Distributed matrix-vector multiplication is often needed for the training of large ML models, where matrix $\A_m$ corresponds to the data and vector $\x_m$ to the model \cite{lee2018speeding,Ozfatura2019speeding}. Using a common training algorithm such as distributed gradient descent, the coded data is transmitted to the workers at the beginning, while multiple iterations of computations are required with the updated model vector. In this scenario, we can use the result of the computation-delay dominant case for worker assignment and load allocation, or modify the communication delay distribution of $\x_m$ by removing the load variable $l_{m,n}$. }
%		Moreover, as the task completion delays over iterations are independent, we focus on a one-shot optimization problem in this work.}
\end{remark}

\section{Fractional Worker Assignment} \label{sec_frac}
While dedicated worker assignment only needs a simple communication connection topology between masters and workers, it may lead to an unbalanced worker assignment, particularly when a few workers are much more powerful than the others, or the number of workers is relatively small. Therefore, in this section, we further consider fractional worker assignment, by allowing each worker to serve multiple masters simultaneously. In this case, we have $\mathcal{K}=[0,1]$, $k_{m,n},b_{m,n}\in\mathcal{K}$, $\forall m,n$, and the CDF of the total delay $T_{m,n}$ is given in \eqref{Tmn_cdf} and \eqref{Tmn_cdf_eq}. Accordingly, problem $\mathcal{P}2$ is a non-convex optimization problem, which is difficult to solve directly.

Similarly to Section \ref{sec_dedi}, we use Markov's inequality to derive an approximation to problem $\mathcal{P}2$, and further simplify the resultant optimization problem by analyzing its optimality condition. We show that the joint bandwidth and computing power allocation under fractional assignment can also be transformed to a max-min allocation problem, and propose a greedy algorithm based on Algorithms \ref{algo_iter} and \ref{algo_simple}.

\subsection{Markov's Inequality based Approximation and its Optimality Condition}
Using the Markov's inequality, $\forall m\in\mathcal{M}$ and $\forall n \in\mathcal{N}$ with $b_{m,n}\neq 0$ and $k_{m,n}\neq 0$, 
\begin{align}\label{approx_T_frac}
&\mathbb{P}\left[T_{m,n}\leq t\right]=1-\mathbb{P}\left[T_{m,n}\geq t\right] \geq 1- \frac{E[T_{m,n}]}{t} \nonumber\\
&=1-\frac{l_{m,n}}{t} \left(\frac{1}{b_{m,n}\gamma_{m,n}}+\frac{1}{k_{m,n}u_{m,n}}+\frac{a_{m,n}}{k_{m,n}}   \right).
\end{align}

In the fractional assignment case, the expected total delay for worker $n\in\mathcal{N}$ to handle a unit coded task of master $m\in\mathcal{M}$ is given by
\begin{align} \label{theta_mn_kb}
	\theta_{m,n}\!\!=\!\!
%	\frac{1}{b_{m,n}\gamma_{m,n}}+\frac{1}{k_{m,n}u_{m,n}}+\frac{a_{m,n}}{k_{m,n}}.
	\begin{cases}
	&\!\!\!\!\!\!\frac{1}{b_{m,n}\gamma_{m,n}}\!+\!\frac{1}{k_{m,n}u_{m,n}}\!+\!\frac{a_{m,n}}{k_{m,n}}, ~k_{m,n}, b_{m,n}\!>\!0,\\
	&\!\!\!\!\!\!\infty, ~~ k_{m,n}=0 \text{~or~} b_{m,n}=0.
	\end{cases}
\end{align}
For local computation at each master $m$, we still have $\theta_{m,0}=\frac{1}{u_{m,0}}+a_{m,0}$. Considering the inherent feature of the system, $k_{m,n}$, $b_{m,n}$ and $l_{m,n}$ are either all non-zero or all zero.
%,and $\lim_{k_{m,n}\to 0 }\theta_{m,n}l_{m,n}=0$, $\lim_{b_{m,n}\to 0 }\theta_{m,n}l_{m,n}=0$.

Substituting \eqref{approx_T_frac} into \eqref{con_dedi}, an approximation to problem $\mathcal{P}2$ under the fractional worker assignment policy is given by
\begin{subequations}
\begin{align}
	\mathcal{P}6\!: \!\! \min_{\{\l,~\k,~\b,~t\}} &~~~~t \\
	\text{s.t.} ~~~~& L_m \!\! -\!\!\sum_{n=0}^{N}\!l_{m,n} \left(1\!-\!\frac{l_{m,n}\theta_{m,n}}{t}\! \right)\!\!\leq \!0
	, \forall m, \label{cons_frac_Lm_approx}\\
	&\sum_{m=1}^{M} k_{m,n}\leq1, ~~\sum_{m=1}^{M} b_{m,n}\leq1,  ~~ \forall n,   \label{cons_frac_kb} \\
	& k_{m,n},b_{m,n} \in [0,1], l_{m,n} \geq 0,~~\forall m, n. \label{cons_frac_l}
\end{align}
\end{subequations}
%In \eqref{cons_frac_Lm_approx}, $\frac{l_{m,n}^2}{b_{m,n}}$, $\frac{l_{m,n}^2}{k_{m,n}}$ and $\frac{1}{t}$ are convex,  which can be written as the difference of convex functions, and thus $\mathcal{P}7$ can be solved by SCA. But this is too complex and impractical.

Compared to $\mathcal{P}4$, problem $\mathcal{P}6$ needs to jointly optimize load allocation $\l$ and the resource allocation $\k$ and $\b$, which is still non-convex. In the following theorem, we derive the KKT optimality condition for $\mathcal{P}6$.
\begin{theorem} \label{opt_cond}
	Given any resource allocation $k_{m,n}$ and $b_{m,n}$, the optimal load allocation $l_{m,n}^*$ to problem $\mathcal{P}6$ that minimizes delay $t^*$ must satisfy the following condition:
		\begin{align}
		&l_{m,n}^*=\frac{t^*}{2\theta_{m,n}},~n\in\mathcal{N}',  \label{opt_cond_l}
		\end{align}
		where $\theta_{m,n}$ is derived from \eqref{theta_mn_kb} according to $k_{m,n}$ and $b_{m,n}$.
\end{theorem}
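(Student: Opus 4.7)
The plan is to treat $\mathcal{P}6$ with $\k$ and $\b$ held fixed as a convex program in $(\l, t)$ and derive the stated relation from its KKT conditions. The convexity check is straightforward: each summand $\theta_{m,n}l_{m,n}^{2}/t$ is a quadratic-over-linear function and hence jointly convex on $t>0$, so the left-hand side of \eqref{cons_frac_Lm_approx} is convex in $(\l_m, t)$, the feasible set is convex, and the objective is linear; Slater's condition is easily verified, so KKT is both necessary and sufficient. Workers with $\theta_{m,n}=\infty$ (i.e., $k_{m,n}=0$ or $b_{m,n}=0$) force the corresponding $l_{m,n}$ to vanish and can be dropped from the analysis.

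Next I would form the Lagrangian
\begin{equation*}
\mathcal{L} = t + \sum_{m=1}^{M}\lambda_m\bigg[L_m - \sum_{n=0}^{N} l_{m,n}\Big(1-\tfrac{l_{m,n}\theta_{m,n}}{t}\Big)\bigg] - \sum_{m,n}\mu_{m,n}l_{m,n},
\end{equation*}
with $\lambda_m,\mu_{m,n}\geq 0$. Stationarity in $l_{m,n}$ gives $\lambda_m(2l_{m,n}\theta_{m,n}/t-1) = \mu_{m,n}$, and complementary slackness ($\mu_{m,n}=0$ whenever $l_{m,n}>0$) reduces this to the dichotomy $\lambda_m=0$ or $l_{m,n}=t/(2\theta_{m,n})$. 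Stationarity in $t$ reads $\sum_{m,n}\lambda_m\theta_{m,n}l_{m,n}^{2}/t^{2}=1$, which forbids all $\lambda_m$ from vanishing simultaneously. For every master $m$ whose constraint is active at the optimum (so $\lambda_m>0$), the claimed identity $l_{m,n}^{*}=t^{*}/(2\theta_{m,n})$ follows at once.

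The main obstacle is any master $m$ whose constraint happens to be slack at the joint optimum (so $\lambda_m=0$ and KKT alone does not pin its loads down). For such a master I would argue separately that $l_{m,n}^{*}=t^{*}/(2\theta_{m,n})$ is still feasible and hence still optimal: substituting into \eqref{cons_frac_Lm_approx} gives $\sum_n t^{*}/(4\theta_{m,n})\geq L_m$, where the inequality uses the per-master lower bound $t^{*}\geq t_m^{*}= L_m/\sum_n 1/(4\theta_{m,n})$ inherited from Theorem \ref{markov_load} applied to master $m$'s subproblem (which has exactly the form of $\mathcal{P}4$ with the same $\theta_{m,n}$). This supplementary feasibility argument bridges the gap left by KKT and allows \eqref{opt_cond_l} to be read as the canonical optimal load uniformly across $m$.
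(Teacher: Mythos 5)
Your proof is correct and follows the same basic route as the paper's (Appendix C): form the Lagrangian of $\mathcal{P}6$ with $\k,\b$ fixed, and read off \eqref{opt_cond_l} from stationarity in $l_{m,n}$ together with the fact that stationarity in $t$ forces a positive multiplier. The differences are ones of rigor, and they favor your version. The paper writes a separate Lagrangian per master with a single multiplier $\lambda_m$, omits the multipliers for $l_{m,n}\geq 0$, and its stationarity-in-$t$ equation $1-\lambda_m^*\sum_n \theta_{m,n}(l_{m,n}^*)^2/(t^*)^2=0$ is stated per master, which silently forces every $\lambda_m^*>0$; in the genuine joint KKT system that condition only reads $\sum_m\lambda_m^*\sum_n\theta_{m,n}(l_{m,n}^*)^2/(t^*)^2=t^{*2}/t^{*2}$ summed over $m$, so it merely rules out all multipliers vanishing at once. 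You identify exactly this gap — a master whose constraint is slack at the joint optimum has $\lambda_m=0$ and its loads are not pinned down by KKT — and close it with the feasibility argument $t^*\geq t_m^*=L_m/\sum_n \tfrac{1}{4\theta_{m,n}}$ inherited from Theorem \ref{markov_load}, so that $l_{m,n}=t^*/(2\theta_{m,n})$ remains feasible and hence optimal (the objective is independent of $\l_m$). This is precisely how the theorem is later used ("without loss of optimality" in simplifying \eqref{cons_frac_Lm_approx}), so your reading of \eqref{opt_cond_l} as a canonical rather than strictly necessary choice for slack masters is the honest statement. Your additional observation that the subproblem is convex (quadratic-over-linear constraint, Slater holds) also upgrades KKT from merely necessary — which the paper invokes for a "non-convex" problem without checking a constraint qualification — to necessary and sufficient. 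No gaps.
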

\begin{proof}
	See Appendix \ref{a3}.
\end{proof}

\subsection{Fractional Worker Assignment Algorithm}
\begin{algorithm} [!t]
	\caption{Greedy Algorithm for Fractional Worker Assignment} \label{algo_frac}
	\begin{algorithmic}[1]
		\State \textbf{Input}: Get an initial dedicated worker assignment according to Algorithm \ref{algo_iter} or \ref{algo_simple}. Let $b_{m,n}=k_{m,n}, \forall m,n$. Initialize $\theta_{m,n}$ according to \eqref{theta_mn_kb}, and let $V_m= \frac{1}{L_m} \sum_{n=0}^{N} \frac{1}{4\theta_{m,n}}, \forall m$. 
		\While {$\max_m V_m > \min_m V_m$ }
		\State $m_1=\arg\max V_m $, $m_2=\arg\min V_m $, $\mathcal{N}_{\text{tmp}}=\{n|k_{m_1,n}>0, k_{m_2,n}=0\}$.
		\State Calculate $\theta_{m_2,n}'=\frac{1}{b_{m_1,n}\gamma_{m_2,n}}+\frac{1}{k_{m_1,n}u_{m_2,n}}+\frac{a_{m_2,n}}{k_{m_1,n}}$, $\forall n\in\mathcal{N}_{\text{tmp}}$.
%		\While {$\{n|k_{m_1,n}>0, k_{m_2,n}=0\}\neq \emptyset$ and $V_{m_1}> V_{m_2}$}
		\State Find worker $n_1=\arg\max_{n\in\mathcal{N}_{\text{tmp}}}\frac{1}{\theta_{m_2,n}'}$ with maximum performance gain for master $m_2$.
%		, i.e., $n_1=\arg\max_{n\in\mathcal{N}_{\text{tmp}}}\frac{1}{\theta_{m_2,n}'}$.
%		\begin{align}
%			n_1=\arg\max_{n\in\mathcal{N}_{\text{tmp}}}\frac{1}{\theta_{m_2,n}'}.
%		\end{align} 
		\If {$V_{m_1}-\frac{1}{4\theta_{m_1,n_1}L_{m_1}}\leq V_{m_2}+\frac{1}{4\theta_{m_2,n_1}'L_{m_2}}$}
		\State Find $k_{m_1,n_1}, b_{m_1,n_1}, k_{m_2,n_1}, b_{m_2,n_1}$, such that $V_{m_1}=V_{m_2}$.
		\Else 
		\State Assign all the resource of worker $n_1$ for master $m_1$ to master $m_2$, i.e., let $k_{m_2,n_1}=k_{m_1,n_1}$, $b_{m_2,n_1}=b_{m_1,n_1}$, and then let $k_{m_1,n_1}=0$, $b_{m_1,n_1}=0$.
		\EndIf
		\State Update $\theta_{m_1,n_1}$,  $\theta_{m_2,n_1}$, $V_{m_1}$, $V_{m_2}$ accordingly.
%		\EndWhile
		\EndWhile
	\end{algorithmic}
\end{algorithm}

Based on Theorem \ref{opt_cond}, without loss of optimality, constraint \eqref{cons_frac_Lm_approx} can be simplified to
\begin{align}
L_m- \sum_{n=0}^{N} \frac{t}{4\theta_{m,n}}\leq 0,~\forall m.
\end{align}
Therefore, problem $\mathcal{P}6$ is equivalent to:
\begin{subequations}
	\begin{align}
	\mathcal{P}7:  \max_{\{\k,~\b\}} &\min_{m\in\mathcal{M}} \frac{1}{L_m} \sum_{n=0}^{N} \frac{1}{4\theta_{m,n}}\\
	\text{s.t.} ~&~~\text{constraints~}\eqref{cons_frac_kb}, \eqref{cons_frac_l}. 
	\end{align}
\end{subequations}
We can see that $\mathcal{P}7$ is very similar to the max-min allocation problem $\mathcal{P}5$ under dedicated worker assignment, except that $\theta_{m,n}$ can further change with respect to the computing power allocation $k_{m,n}\in[0,1]$ and communication bandwidth allocation $b_{m,n}\in[0,1]$. Therefore, we adopt the dedicated assignment as an initialization, and iteratively balance the resource allocation between the master $m_1=\arg\max V_m$ with maximum sum value and  master $m_2=\arg\min V_m$ with minimum sum value. To balance their sum values, we select a worker $n_1$ that serves $m_1$ but not $m_2$ for the moment, with maximum potential performance gain for $m_2$, as shown in Lines 3-5. Then, part or all of the computing power and communication bandwidth of worker $n_1$ are re-assigned to master $m_2$, as shown in Lines 6-10.
Note that, in practice, we may not want to make the topology of masters and workers too complicated. In that case, we can limit the maximum number of masters each worker can serve in Algorithm \ref{algo_frac}.

We also remark that, by substituting $\gamma_{m,n}$, $u_{m,n}$ and $a_{m,n}$ in \eqref{EXt_dedi} with $b_{m,n}\gamma_{m,n}$, $k_{m,n}u_{m,n}$ and $\frac{a_{m,n}}{k_{m,n}}$, respectively, the SCA-enhanced load allocation (Algorithm \ref{algo_SCA}) can also be implemented for the fractional worker assignment problem after Algorithm \ref{algo_frac}.

\section{Simulation Results} \label{sim}

In this section, we show the simulation results of the proposed algorithms under various settings. We first verify the feasibility of the Markov's inequality based approximation, and then evaluate the task completion delay of the proposed algorithms and benchmarks. Finally, we sample the task completion delay from commercial compute platform Amazon EC2, and use the measured data to further validate the proposed algorithms. 

\subsection{Validation of Markov's Inequality based Approximation} \label{sim-vali}

\begin{figure}[t]
	\centering
	\subfigcapskip=-1mm
	\subfigure[Average task completion delay.]{\label{fig_val_m2n5_bar}			
		\includegraphics[width=0.42\textwidth]{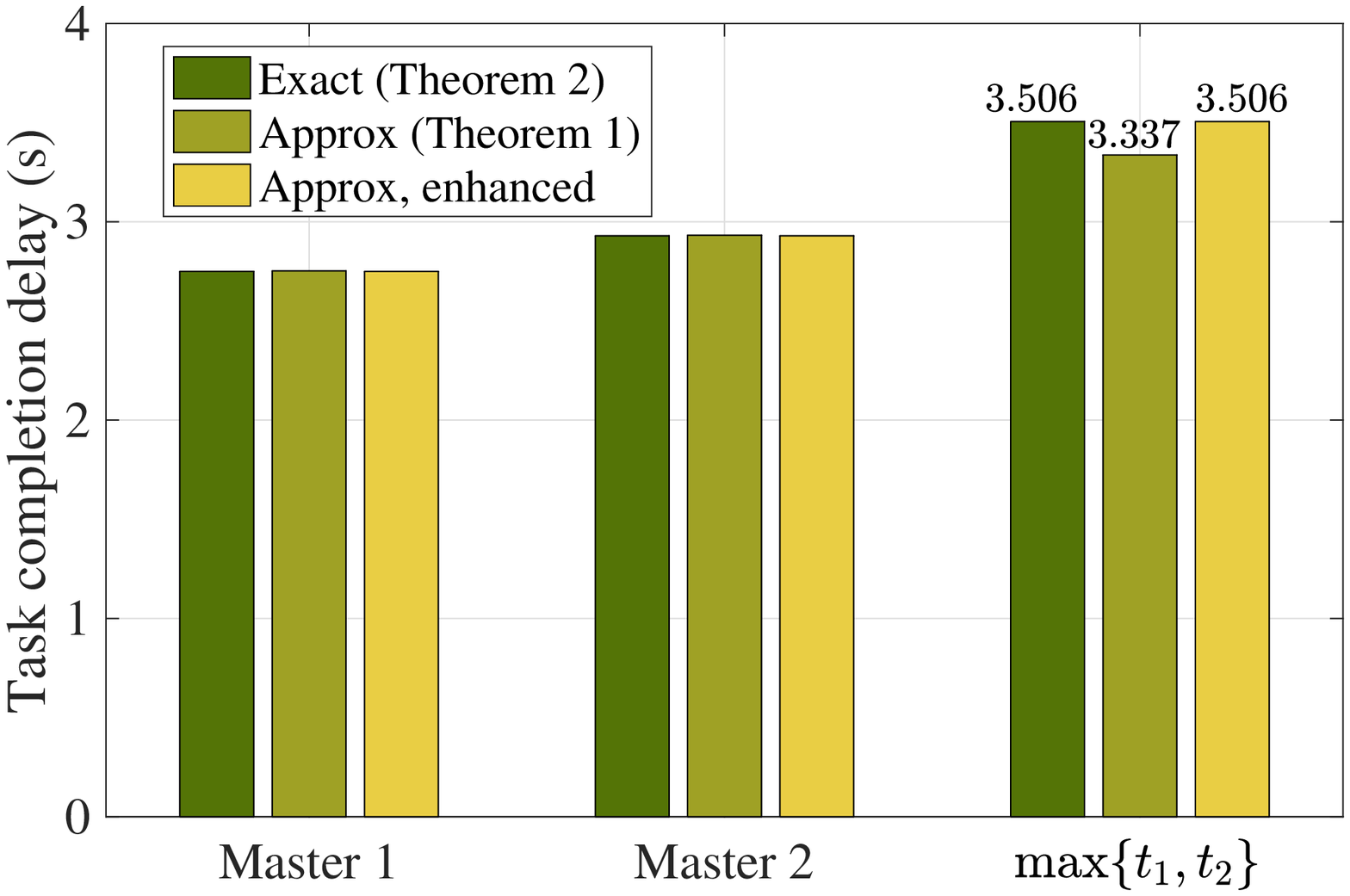}}
	\subfigure[CDF of task completion delay.]{\label{fig_val_m2n5_cdf}			
		\includegraphics[width=0.42\textwidth]{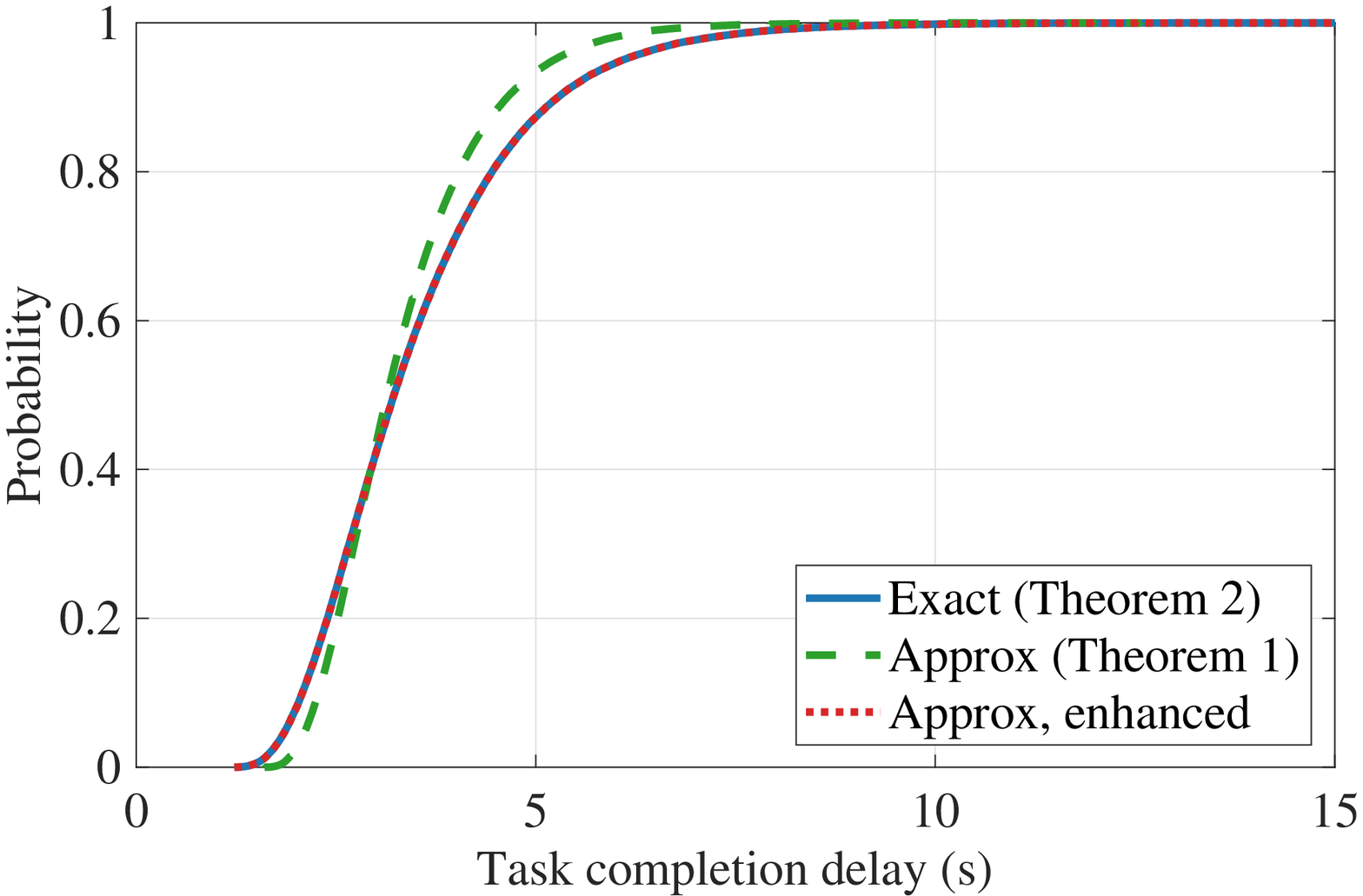}}
	\caption{Validation of the Markov's inequality based approximation in the $2$-master, $5$-worker case.}	
	\vspace{-1mm}\label{fig_val_m2n5}
\end{figure}

\begin{figure}[t]
	\centering
	\subfigcapskip=-1mm
	\subfigure[Average task completion delay.]{\label{fig_val_m4n50_bar}			
		\includegraphics[width=0.42\textwidth]{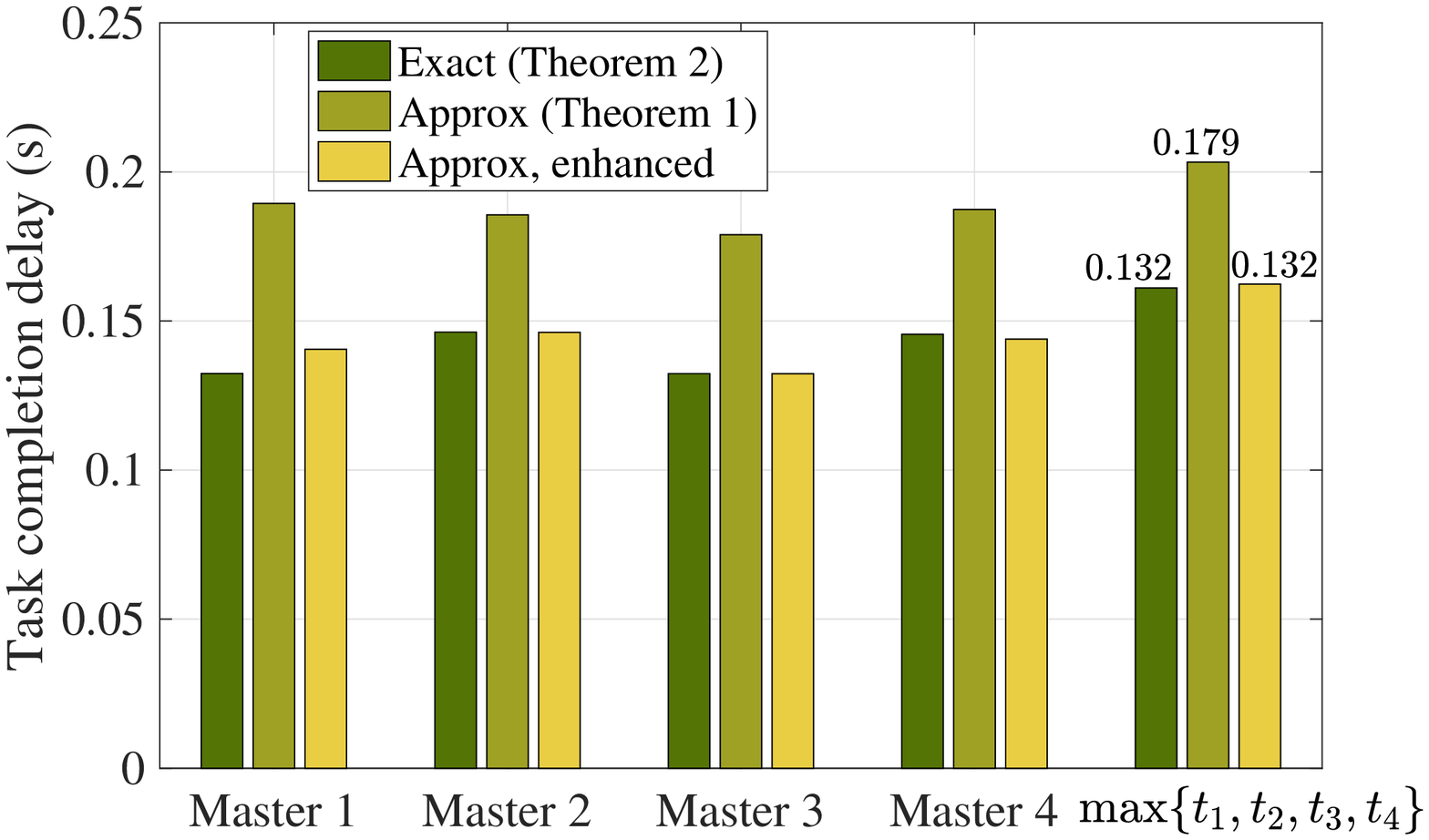}}
	\subfigure[CDF of task completion delay.]{\label{fig_val_m4n50_cdf}			
		\includegraphics[width=0.42\textwidth]{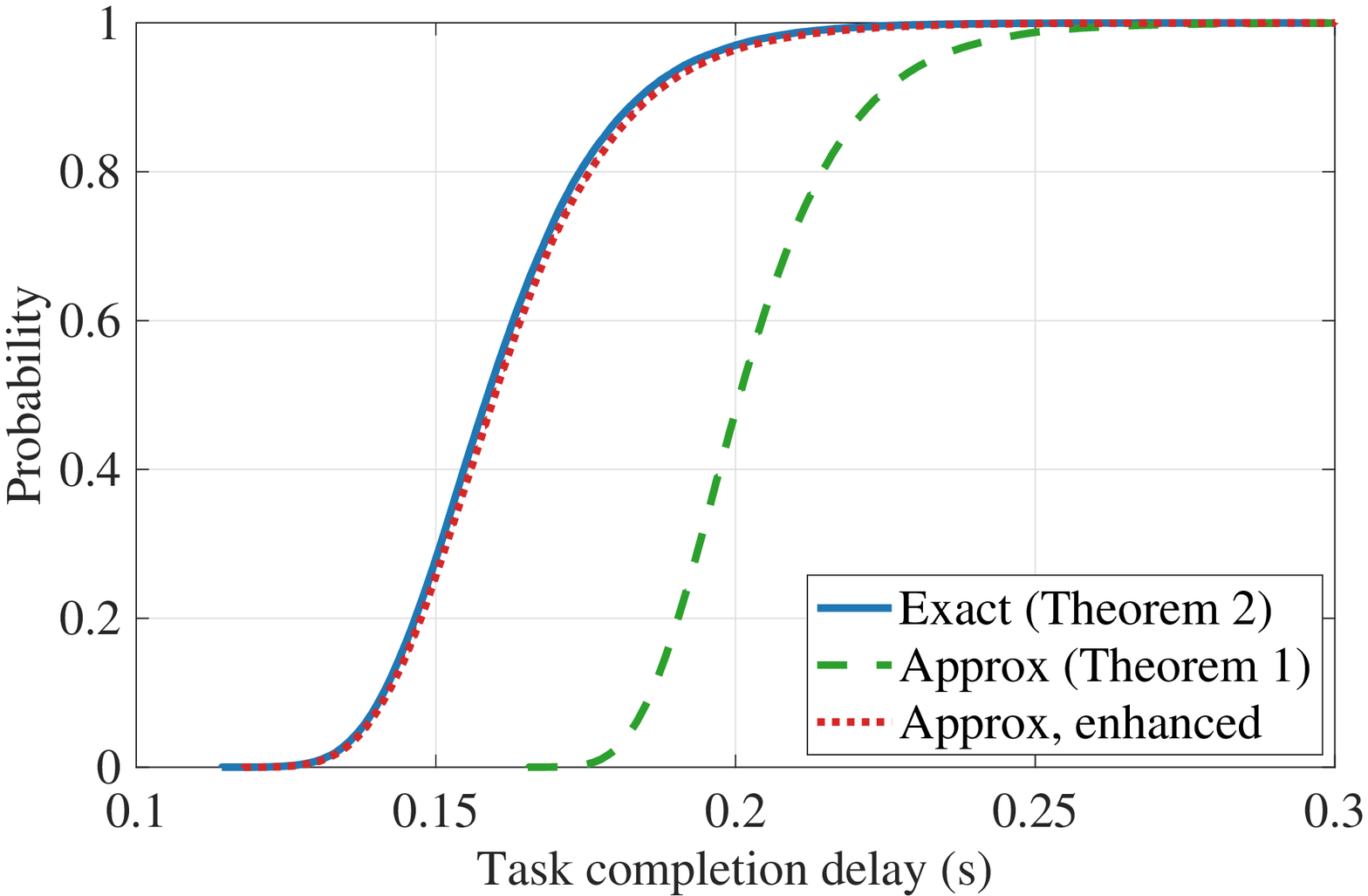}}
	\caption{Validation of the Markov's inequality based approximation in the $4$-master, $50$-worker case.}	
	\vspace{-1mm}\label{fig_val_m4n50}
\end{figure}

We first show in Fig. \ref{fig_val_m2n5} and Fig. \ref{fig_val_m4n50} that, the task completion delay achieved by solving the Markov's inequality based approximation problem is similar to the original problem. To achieve this verification, we consider the computation delay dominant case, where the optimal load allocation to the original problem $\mathcal{P}3$ (\emph{`Exact'}) can be derived from Theorem \ref{opt_load}, and the approximate load allocation (\emph{`Approx'}) is derived from Theorem \ref{markov_load}. Based on the two load allocation results, iterated greedy algorithm, i.e., Algorithm \ref{algo_iter}, is adopted to assign the workers in a dedicated manner. Corresponding to the SCA-enhanced load allocation in Section \ref{sub-SCA}, we further use Theorem \ref{opt_load} to improve the performance after obtaining the worker assignment based on the approximate load allocation, labeled as \emph{`Approx, enhanced'}.

We consider two scenarios with different scales. In the small-scale scenario, there are $M=2$ masters and $N=5$ workers. For each worker, the shift parameter $a_{m,n}$ of the computation delay distribution is randomly selected from $\{0.2,0.25,0.3\} ~\mathrm{ms}$, while for each master, $a_{m,0}\in\{0.4,0.5\} ~\mathrm{ms}$. The rate parameter is $u_{m,n}=\frac{1}{a_{m,n}}, \forall m,n$, and the load of the original task is set to $L_m=10^4, \forall m$ \cite{codedhet}. In the large-scale scenario, there are $M=4$ masters and $N=50$ workers. Parameter $a_{m,n}$ is randomly chosen from $[0.05,0.5] ~\mathrm{ms}$, while $u_{m,n}$ and $L_m$ remain the same. After deriving the load allocation and worker assignment from the corresponding theorems and algorithms, we run Monte Carlo realizations for $10^6$ times and present the average value and the CDF of the empirical task completion delay. 

%Note that, what we count in the simulation is the actual delay that tasks are indeed completed, concerning constraint \eqref{ori_cons_exp1} of the original problem $\mathcal{P}1$ with probability $\rho_s=1$, since the actual task completion delay is more sensible in practice.

Fig. \ref{fig_val_m2n5} and Fig. \ref{fig_val_m4n50} show the validation results under the small-scale and large-scale scenarios, respectively. In each histogram, the first $M$ groups of bars show the average task completion delay of each master under different solutions. The last group of bars show the average delay of all the tasks, which is what we aim to minimize in $\mathcal{P}2$, obtained by taking the maximum delay among $M$ masters in each Monte Carlo realization and then taking the average. Overall, the gap between the Markov's inequality based approximate solution and the optimal solution is acceptable, while the enhanced approximate solution has almost the same performance as the optimal one, in terms of both the average and the CDF of task completion delay under both scenarios.
We can also see from Fig. \ref{fig_val_m2n5_bar} that, the approximate solution can achieve even lower average delay when the number of workers is small. This is because, the approximate problem $\mathcal{P}4$ provides a tighter constraint and thus increases the redundancy of load, making the system more robust to stragglers in some cases.

\subsection{Performance of the Proposed Dedicated and Fractional Worker Assignment Algorithms }

Now we take the communication delay into account and evaluate the proposed algorithms. The simulation settings for both small-scale and large-scale scenarios remain the same as the previous subsection, while the communication rate parameter of each worker is set to $\gamma_{m,n}=2u_{m,n}, \forall m,n$. For the SCA algorithm, the step-size decreasing ratio is set to $\alpha=0.995$. We compare the delay performance of the proposed algorithms with the following benchmarks:

\emph{1) Uncoded computation with uniform worker assignment}: Each master is assigned an equal number of $\frac{N}{M}$ workers, and $\A_m$ is equally partitioned into $\frac{N}{M}$ sub-matrices without coding.
%, each with $\frac{L_mM}{N}$ rows.	

\emph{2) Coded computation with uniform worker assignment}: Each master is assigned an equal number of $\frac{N}{M}$ workers, and the load allocation is given by Theorem \ref{opt_load}. This benchmark can be regarded as the scheme presented in \cite{codedhet}, where only the computation delay is considered under a single master scenario.

\emph{3) Brute-force search for optimal fractional worker assignment}: The optimal benchmark is obtained by traversing all possible $k_{m,n}$ and $b_{m,n}$ at a step-size of $0.01$. SCA-enhanced load allocation is further implemented after getting the optimal fractional worker assignment. Note that, as the brute-force search is with extremely high complexity, we can only provide this result in the small-scale scenario.

\begin{figure}[t]
	\centering
	\subfigcapskip=-1mm
	\subfigure[$M=2$ masters, $N=5$ workers.]{\label{fig_2-5_bar}			
		\includegraphics[width=0.5\textwidth]{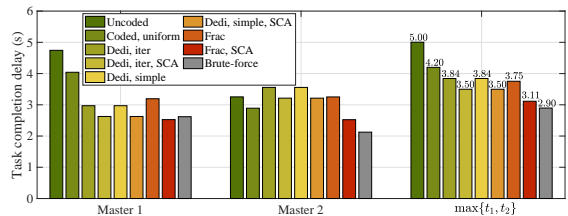}}\\
	%	\vspace{-3mm}\hspace{1mm}
	\subfigure[$M=4$ masters, $N=50$ workers.]{\label{fig_4-50_bar}			
		\includegraphics[width=0.5\textwidth]{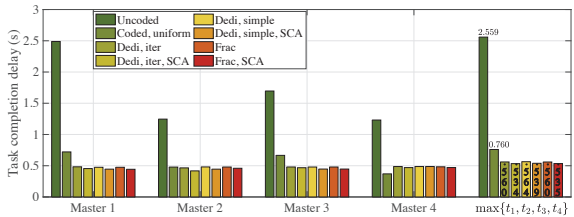}}
	%	\vspace{-3mm}
	\caption{Average task completion delay of the proposed algorithms and benchmarks.}	
	\vspace{-5mm}\label{fig_2-5}
\end{figure}

The average task completion delay in the two scenarios are shown in Fig. \ref{fig_2-5}. We use `Dedi, iter', `Dedi, simple' and `Frac' to represent the worker assignment results from Algorithms \ref{algo_iter}, \ref{algo_simple} and \ref{algo_frac}, respectively. The legend with `SCA' indicates that SCA-enhanced load allocation is further implemented. As shown in Fig. \ref{fig_2-5_bar}, in the small-scale scenario, our proposed algorithms outperform the uncoded and coded benchmarks by balancing the worker assignment, while the fractional assignment is slightly better than the dedicated one.
With SCA enhancement, the average delay can be decreased by $8.85\%$ under dedicated worker assignment, while the delay can be substantially decreased by $17.1\%$ with fractional assignment. We can also see that, the delay performance of the SCA-enhanced fractional assignment is close-to-optimal.
As shown in Fig. \ref{fig_4-50_bar}, in the large-scale scenario, iterated greedy algorithm can seek a better assignment compared to the simple greedy algorithm under the dedicated case. On the other hand, fractional assignment achieves the same performance as iterated greedy, since dedicated algorithm can already balance the worker assignment when the number of workers is large. With SCA-enhancement, the delay performance can be further decreased by over $4.4\%$, but we should also be aware that the complexity of the SCA algorithm is high under the large-scale scenario. Compared to the uncoded and coded benchmarks, up to $79\%$ and $30\%$ delay reduction can be achieved by the proposed algorithm, respectively.

\begin{figure}[t]
	\centering
	\subfigcapskip=-1mm
	\subfigure[$M=2$ masters, $N=5$ workers.]{\label{fig_2-5_cdf}			
		\includegraphics[width=0.5\textwidth]{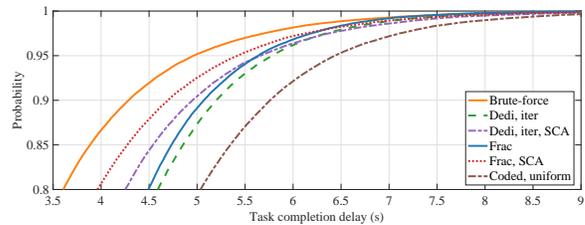}}\\
%	\vspace{-3mm}\hspace{1mm}
	\subfigure[$M=4$ masters, $N=50$ workers.]{\label{fig_4-50_cdf}			
		\includegraphics[width=0.5\textwidth]{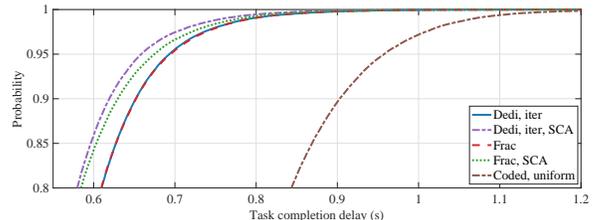}}
%	\vspace{-3mm}
	\caption{CDF of the task completion delay of the proposed algorithms and benchmarks.}	
	\vspace{-5mm}\label{fig_4-50}
\end{figure}

In order to observe the delay performance of the initial optimization problem $\mathcal{P}1$, we further plot the CDF of the task completion delay in Fig. \ref{fig_4-50}. Given the probability threshold $\rho_s$, we can obtain the corresponding delay from the x-axis, such that constraint \eqref{ori_cons_exp1} is satisfied. This figure shows the tail distribution of the task completion delay, and can reflect the robustness of the system under different algorithms. As shown in Fig. \ref{fig_4-50_cdf}, given  $\rho_s=0.95$, the delays achieved by the SCA-enhanced dedicated assignment, dedicated assignment and the coded benchmark are $0.658\mathrm{s}$, $0.694\mathrm{s}$ and $0.957\mathrm{s}$, respectively. That is, over $30\%$ delay reduction can be achieved by the proposed algorithm compared to the coded benchmark. We can also see that, a good solution to the approximation problem $\mathcal{P}2$ also leads to a good delay performance for the original problem $\mathcal{P}1$ in general, and thus solving $\mathcal{P}2$ is reasonable.

The impact of communication rate on the average task completion delay and the load allocation is investigated in Fig. \ref{fig_gamma_ratio_m4n50}, by varying $\frac{\gamma_{m,n}}{u_{m,n}}$ while fixing $u_{m,n}$. As shown in Fig. \ref{fig_gamma_m4n50}, when $\frac{\gamma_{m,n}}{u_{m,n}}$ is small, the communication rate between each master and worker is low, and thus the average task completion delay is high. Meanwhile, the proposed dedicated and fractional worker assignment algorithms always achieve significantly lower delay compared to the benchmarks. Fig. \ref{fig_gamma_lm0_m4n50} plots the ratio of load allocated to the master itself to the total load, i.e., $\frac{l_{m,0}}{\sum_{n\in\mathcal{N}'}l_{m,n}}$. As the two benchmarks do not take the communication delay into account, the ratio remains the same over different communication rates. With the proposed algorithms, this ratio decreases as $\frac{\gamma_{m,n}}{u_{m,n}}$ increases, since more computation load is allocated to the workers when communication is faster.
\begin{figure}[t]
	\centering
	\subfigcapskip=-1mm
%	\hspace{-3mm}
	\subfigure[Average task completion delay.]{\label{fig_gamma_m4n50}			
		\includegraphics[width=0.4\textwidth]{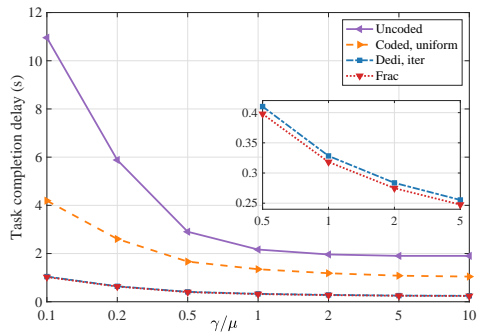}}
	%	\hspace{-6mm}
	\subfigure[The ratio of local processing load to total load.]{\label{fig_gamma_lm0_m4n50}			
		\includegraphics[width=0.4\textwidth]{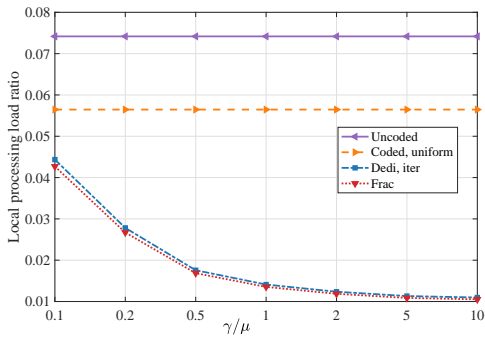}}
	%	\vspace{-3mm}
	\caption{Performance evaluation under different communication rates in the $4$-master, $50$-worker case.}	
	\vspace{-3mm}\label{fig_gamma_ratio_m4n50}
\end{figure}

\subsection{Delay Sampling on Amazon EC2 Instances and the Corresponding Algorithm Evaluation}

We further evaluate the delay performance of the proposed algorithms based on measured computation delays on the commercial compute platform Amazon EC2. To do so, we sample the computation delay on two types of Amazon EC2 instances called t2.micro and c5.large, by feeding each instance a $10^6$-dimension, float-number vector multiplication task for $10^6$ times. We plot the cumulative frequency distribution of the sampled computation delay in Fig. \ref{EC2_test}, and then fit the data with shifted exponential distribution. For the t2.micro instance, the shift parameter is $a=1.36~\mathrm{ms}$, and the rate parameter is $u=4.976~\mathrm{ms}^{-1}$. For c5.large instance, $a=0.97~\mathrm{ms}$ and $u=19.29~\mathrm{ms}^{-1}$. We can see that, in terms of the computation capability, the c5.large instance is more powerful than the t2.micro instance, and the fitting of the shifted exponential distribution is accurate.

%Simulation results based on real data computation delay, sampled from Amazon EC2. We first sample the CDF of computation delay by feeding the t2.micro/C5 instance a $10^8$-dimension vector multiplication task for $10^6$ times, where each entry is a float number within $[0,1]$. Then we plot the CDF and fit the curve with shifted exponential distribution. We use the fitted parameters to obtain the worker and load allocation, and use the real data to carry out the Monte Carlo simulation. 
%The fitted parameters are: t2.micro $a=0.136$, $u=49.76$, C5 $a=0.097$, $u=192.9$, no communication delay.

%\begin{figure}[!t]
%	\centering
%	\includegraphics[width=0.9\textwidth]{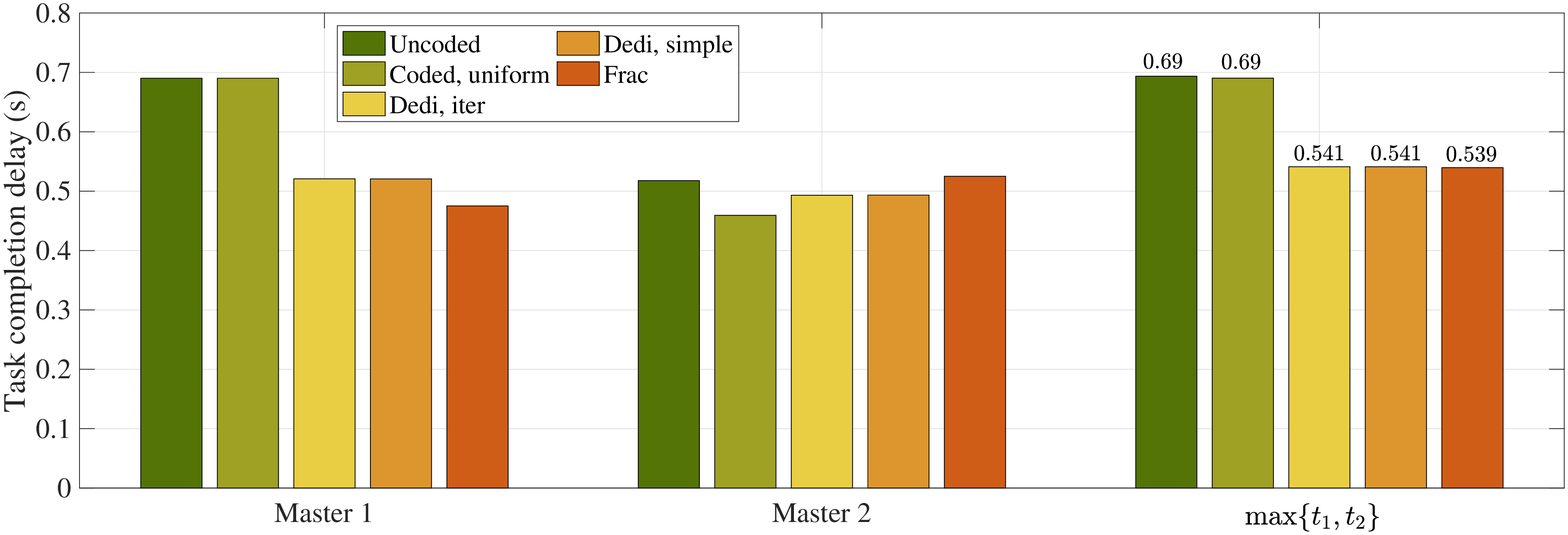}\vspace{-5mm}
%	\caption{Average task completion delay under the $2$-master, $5$-worker case.}	\vspace{-5mm}\label{t_amazon_m2n5}
%\end{figure}	

Finally, we use the measured data to evaluate the proposed algorithms. We consider a computation delay dominant scenario with $4$ masters and $50$ workers. All the masters and $40$ workers are considered as t2.micro instances, while the remaining $10$ workers are c5.large instances. We use the fitted distribution for load allocation and worker assignment, and then use the measured data to simulate the average task completion delay with the Monte Carlo method. As shown in Fig. \ref{t_amazon_m4n50}, the proposed dedicated and fractional worker assignment algorithms still outperform the uncoded and coded benchmarks, with up to $82\%$ and $30\%$ delay reductions, respectively. Comparing the two dedicated assignment algorithms, the iterated greedy algorithm achieves a much lower delay under this practical scenario. Meanwhile, fractional assignment can slightly decrease the average task completion delay compared with the iterated dedicated assignment.

Summarizing all the simulation results, we remark that, in the small-scale scenario where the number of master and worker nodes is small, SCA-enhanced fractional assignment is the best algorithm with great advantage over other alternatives. On the other hand, in the large-scale scenario, dedicated assignment by the iterated greedy algorithm is satisfactory when considering the delay performance together with the complexity of the algorithm and the network topology.

\begin{figure}[!t]
	\centering	
	%	\vspace{-5mm}
	\subfigcapskip=-1mm
	\subfigure[t2.micro instance.]{\label{CDF_micro}			
		\includegraphics[width=0.48\textwidth]{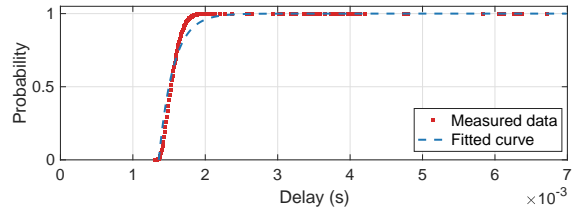}}	\\
	%	\vspace{-3mm}	
	%	\hspace{1mm}
	\subfigure[c5.large instance.]{\label{CDF_large}			
		\includegraphics[width=0.48\textwidth]{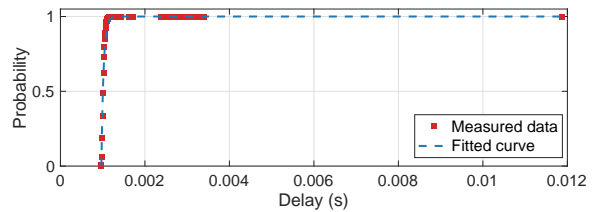}}
	%	\vspace{-3mm}
	\caption{The cumulative frequency distribution of the measured delay and its fitted curve based on shifted exponential distribution, on different types of Amazon EC2 instances.}
	\vspace{-1mm}
	\label{EC2_test}
\end{figure} 

\begin{figure}[!t]
%	\vspace{-4mm}
	\centering
	\includegraphics[width=0.5\textwidth]{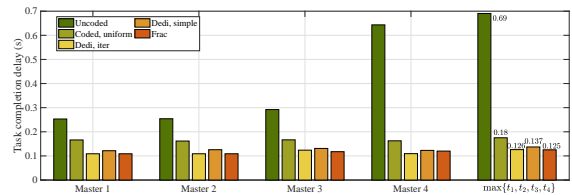}\vspace{-2mm}
	\caption{Average task completion delay under the $4$-master, $50$-worker case.}	
	\vspace{-1mm}\label{t_amazon_m4n50}
\end{figure}
	
\begin{comment}

%\begin{figure}[t]  cdf_m2n5
%	\centering
%	\subfigure[$M=2$, $N=5$.]{\label{fig_gamma_mu_m2n5}			
%		\includegraphics[width=0.45\textwidth]{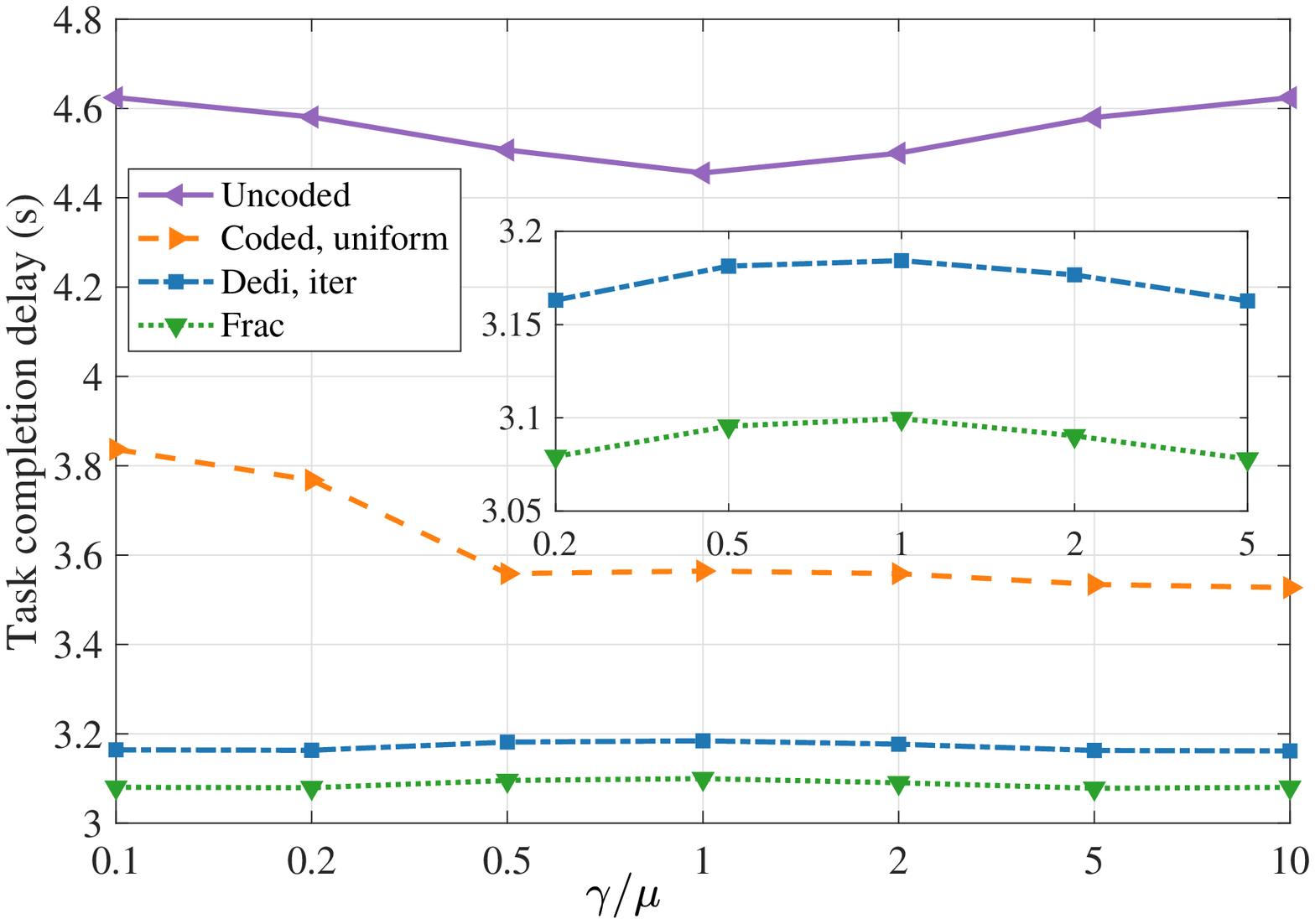}}
%	\subfigure[$M=4$, $N=50$.]{\label{fig_gamma_mu_m4n50}			
%		\includegraphics[width=0.45\textwidth]{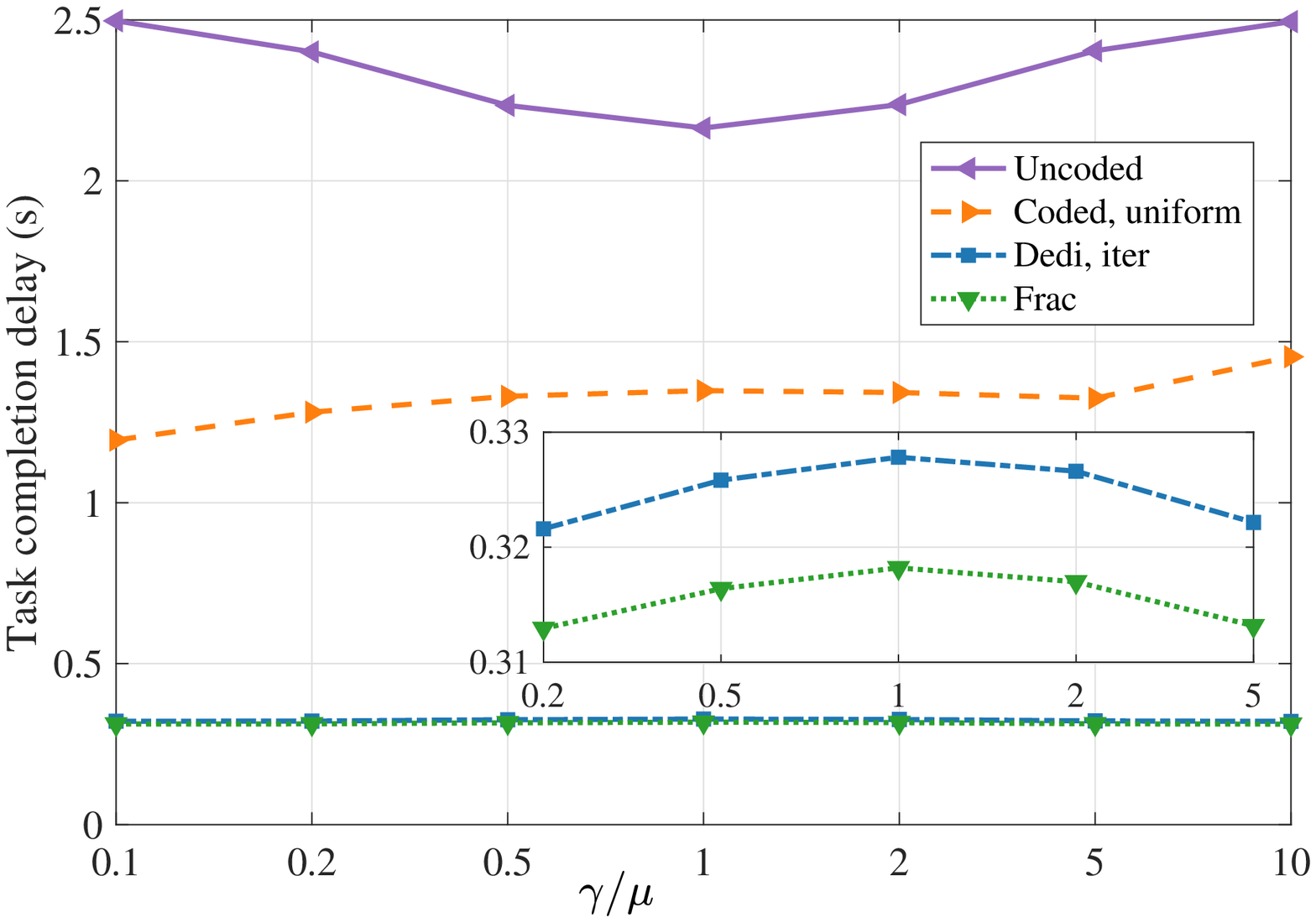}}
%	\caption{Fix $\frac{1}{\gamma_{m,n}}+\frac{1}{u_{m,n}}=2a_{m,n}$, and vary $\frac{\gamma_{m,n}}{u_{m,n}}$.
%		\textcolor{red}{Convexity???}}	\vspace{-5mm}\label{fig_gamma_mu_ratio}
%\end{figure}

%\begin{figure}[t]
%	\centering
%	\hspace{-3mm}
%	\subfigure[Task completion delay.]{\label{fig_gamma_m2n5}			
%		\includegraphics[width=0.49\textwidth]{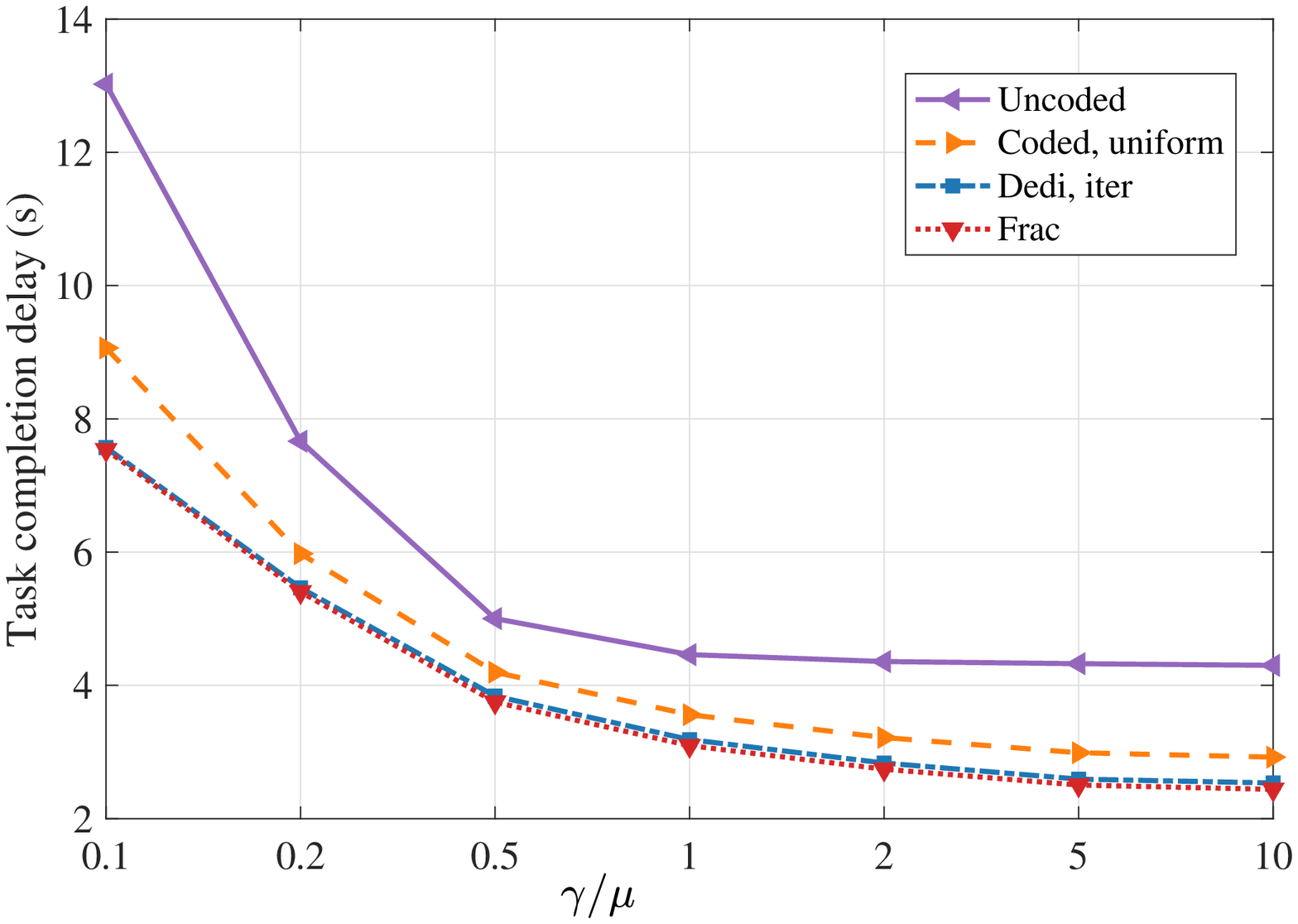}}
%	\hspace{-6mm}
%	\subfigure[Local processing load.]{\label{fig_gamma_lm0_m2n5}			
%		\includegraphics[width=0.49\textwidth]{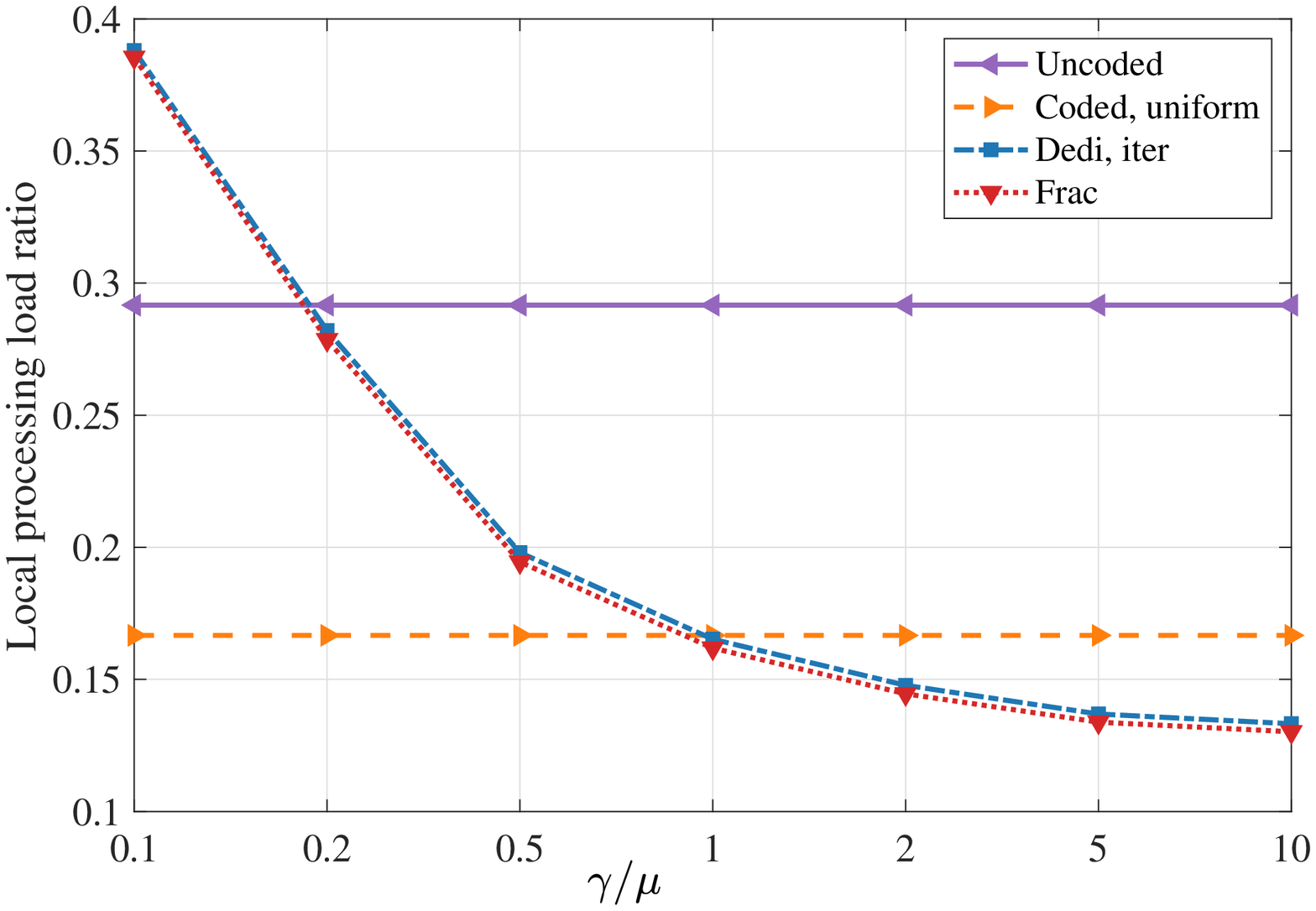}}
%	\caption{$M=2$, $N=5$. Fix $a_{m,n}$ and $u_{m,n}$, and vary $\frac{\gamma_{m,n}}{u_{m,n}}$ (which is to vary the communication capability $\frac{1}{\gamma_{m,n}}$).}	\vspace{-5mm}\label{fig_gamma_ratio}
%\end{figure}

\end{comment}	

\section{Conclusions} \label{con}
We have investigated a joint worker assignment, resource allocation, and load allocation problem in an MDS-coded distributed computing scenario with multiple masters and heterogeneous workers, aiming to minimize the communication and computation delay of tasks. Dedicated and fractional worker assignment and load allocation algorithms have been proposed, employing Markov's inequality-based approximation, Karush-Kuhn-Tucker conditions, and SCA techniques for the analysis and optimization of these algorithms. 
Simulations under various settings have shown that the proposed algorithms can significantly reduce the task completion delay compared to the benchmark algorithms, while the SCA-enhanced fractional assignment algorithm can achieve close-to-optimal delay performance when the number of master and worker nodes is small.
Considering measured data on Amazon EC2 platform for delay evaluation, we have shown that about $82\%$ and $30\%$ delay reductions can be achieved by the proposed algorithms compared to the uncoded and coded benchmarks, respectively. 
We have observed that SCA-enhanced fractional assignment significantly outperforms the other proposed algorithms under small-scale scenarios. Meanwhile, the dedicated policy with iterated greedy assignment can be a practical alternative for large-scale scenarios, when the delay performance, resultant communication network topology, and algorithm complexity are jointly taken into account.

As future directions, multi-message communication schemes \cite{Ozfatura2020straggler} as well as the costs of encoding and decoding can be further incorporated into the current optimization framework.

\appendices{}

\section{Proof of Theorem 1} \label{a1}
For $x>0,~y>0$, $f(x,y)=\frac{x^2}{y}$ is convex. Therefore, problem $\mathcal{P}4$ is a convex optimization problem.
The Lagrangian of  $\mathcal{P}4$ is given by
\begin{align}
&\mathcal{L}(\l_m,t_m, \lambda_m) \!=\! t_m \!\!+ \!\!
\lambda_m\! \left[\!L_m\!-\!\sum_{n\in\Omega_m'}\!\left(\!l_{m,n}\!-\!\frac{\theta_{m,n} l_{m,n}^2}{t_m}\!\right) \!\right], \nonumber
\end{align}
where $\lambda_m\geq 0$ is the Lagrange multiplier associated with \eqref{con_dedi_approx}.

The partial derivatives of $\mathcal{L}(\l_m,t_m, \lambda_m)$ can be derived as
\begin{subequations}
	\begin{align}
	&\frac{\partial \mathcal{L}}{\partial l_{m,n}}=-\lambda_m+\lambda_m\theta_{m,n} \frac{2l_{m,n}}{t_m},  \\
	~&\frac{\partial \mathcal{L}}{\partial t_m}=1-\lambda_m\sum_{n\in\Omega_m'}\frac{\theta_{m,n} l_{m,n}^2}{t_m^2}. 
	\end{align}
\end{subequations}

The Karush-Kuhn-Tucker (KKT) conditions are written as follows:
\begin{subequations}
	\begin{align}
	&\frac{\partial \mathcal{L}}{\partial l_{m,n}^*}=0,~\forall n\in \Omega_m',~~ 
	\frac{\partial \mathcal{L}}{\partial t_m^*}=0, \\
	&\lambda_m^*\left[L_m-\sum_{n\in\Omega_m'}\left(l_{m,n}^*-\frac{\theta_{m,n} {l_{m,n}^*}^2}{t_m^*}\right) \right]=0,\\
	&\lambda_m^*\geq 0,~~l_{m,n}^*\geq 0.
	\end{align}
\end{subequations}

By solving the KKT conditions, we get the optimal load allocation and task completion delay to $\mathcal{P}4$, as shown in Theorem 1.
%\begin{subequations}
%	\begin{align}
%	&t_m^*=\frac{L_m}{\sum_{n\in\Omega_m'} \frac{1}{4\theta_{m,n}}}, ~l_{m,n}^*=\frac{L_m}{\theta_{m,n}\sum_{n\in\Omega_m'} \frac{1}{2\theta_{m,n}}},~\forall n\in \Omega_m'. 
%	\end{align}
%\end{subequations}

\section{Proof of Theorem 2} \label{a2}
When the computation delay dominates the total delay, the optimization problem $\mathcal{P}3$ is given by
\begin{subequations}
	\begin{align}
	\mathcal{P}3(1) \!:\! \min_{\{\l_m,~t_m\}} &~~~~t_m  \nonumber\\
	\text{s.t.} ~~~&L_m \!\leq\!\!\!\!\sum_{n\in \Omega_m'}\!\! l_{m,n}\! \left(\!  1-\! e^{-\frac{u_{m,n}}{l_{m,n}}\left(t_m-a_{m,n}l_{m,n}\! \right)} \! \right),  \label{con_dedi_cp} \nonumber\\
	&~l_{m,n}\geq 0 , ~\forall n\in \Omega_m'. \nonumber
	\end{align}
\end{subequations}

We first prove that $\mathcal{P}3(1)$ is a convex optimization problem.
Let $f(x,t)=-x \left( 1-e^{-\frac{u}{x}(t-ax)}\right)$,
%\begin{align}
%f(x,t)=-x \left( 1-e^{-\frac{u}{x}(t-ax)}\right),  \label{fxt}
%\end{align}
with variables $x>0$, $~t\geq ax$, and parameters $u>0$, $a>0$.
The Hessian matrix of $f(x,t)$ is:
\begin{align}
\H= \left[
\begin{matrix}
\frac{\partial^2 f }{\partial x^2 }  &\frac{\partial^2 f }{\partial x \partial t }\\
\frac{\partial^2 f }{\partial t \partial x} &\frac{\partial^2 f }{\partial  t^2 }
\end{matrix}
\right]=e^{-\frac{u}{x}(t-ax)} \left[
\begin{matrix}
\frac{u^2t^2}{x^3}  &-\frac{u^2t}{x^2}\\
-\frac{u^2t}{x^2}&\frac{u^2}{x}
\end{matrix}
\right].
\end{align}
The eigenvalues of $\H$ are $0$ and $\frac{u^2(x^2+t^2)}{x^3}>0$. Thus $\H \succeq 0$, and $f(x,t)$ is convex.
As the summation of convex functions are still convex, \eqref{con_dedi_cp} is convex. Therefore, $\mathcal{P}3(1)$ is convex.

The Lagrangian  is given by
\begin{align}
&\mathcal{L}(\l_m,t_m, \lambda_m) = t_m+\lambda_m\left(L_m-\mathbb{E}[X_m(t_m)]\right) \nonumber\\
&=\!t_m \!\!+\!\lambda_m\!\!\! \left[ \!L_m\!-\!\!\!\!\sum_{n\in \Omega_m'}\!\!\! l_{m,n}\! \left(\! 1-e^{-\frac{u_{m,n}}{l_{m,n}}\left(t_m-a_{m,n}l_{m,n}\right)}\! \right) \!\right]. \label{p3_lagrange}
\end{align}

The partial derivatives of $\mathcal{L}$ can be derived as
\begin{subequations}
\begin{align}
\frac{\partial \mathcal{L}}{\partial l_{m,n}}\!&=\!
\lambda_m \!\left[\left(1+\frac{u_{m,n}t_m}{l_{m,n}}\right)e^{-\frac{u_{m,n}}{l_{m,n}}\left(t_m-a_{m,n}l_{m,n}\!\right)} \!-\!1  \right],  \label{partial_l} \\
\frac{\partial \mathcal{L}}{\partial t_m}&=1-\lambda_m\sum_{n\in \Omega_m'}u_{m,n}e^{-\frac{u_{m,n}}{l_{m,n}}\left(t_m-a_{m,n}l_{m,n}\right)}.  \label{partial_t}
\end{align}
\end{subequations}

The optimal solution $(\l^*_m,t^*_m, \lambda^*_m)$ needs to satisfy the following KKT conditions
\begin{subequations}
	\begin{align}
	&\frac{\partial \mathcal{L}}{\partial l^*_{m,n}}=0, ~\forall n\in \Omega_m' ,
	~~\frac{\partial \mathcal{L}}{\partial t^*_m}=0,  \label{kkt_2} \\ 
	&\lambda^*_m \!\! \left[ \!L_m\!\!-\!\!\!\sum_{n\in \Omega_m'} \!\!l_{m,n}^* \!\left( \! 1\!-\!e^{-\frac{u_{m,n}}{l_{m,n}^*}\left(t_m^*\!-a_{m,n}l_{m,n}^*\!\right)}\! \right)\!\right]\!=\!0, \label{kkt_3}  \\
	&\lambda^*_m\geq 0, ~l^*_{m,n}\geq 0.  \label{kkt_4}
	\end{align}
\end{subequations}

By jointly considering \eqref{partial_t} and \eqref{kkt_2}, we get $\lambda_m^*>0$. 
Substituting \eqref{partial_l} into $\frac{\partial \mathcal{L}}{\partial l_{m,n}^*}=0$ yields
\begin{align}
&	\left(1+\frac{t_m^*u_{m,n}}{l_{m,n}^*}\right)e^{u_{m,n}\left(a_{m,n}-\frac{t_m^*}{l_{m,n}^*}\right)}-1=0, \nonumber\\
&
-\left(1+\frac{t_m^*u_{m,n}}{l_{m,n}^*}\right)e^{-\left(1+\frac{t_m^*u_{m,n}}{l_{m,n}^*}\right)}=-e^{-u_{m,n}a_{m,n}-1}, \nonumber
%&   1+\frac{t_m^*u_{m,n}}{l_{m,n}^*}=-\mathcal{W}_{-1}(-e^{-u_{m,n}a_{m,n}-1} ),
\end{align}
Let $\mathcal{W}_{-1}(x)$ be the lower branch of Lambert W function, where $x\leq -1$ and $\mathcal{W}_{-1}(xe^x)=x$. Then we have
\begin{align} \label{opt_t_over_l}
\frac{t^*_m}{l^*_{m,n}}=\frac{-\mathcal{W}_{-1}(-e^{-u_{m,n}a_{m,n}-1} )-1}{u_{m,n}}\triangleq \phi_{m,n}.
\end{align}
Substituting \eqref{opt_t_over_l} into \eqref{kkt_3}:
\begin{align}
&L_m-\sum_{n\in \Omega_m} \frac{t^*_m}{\phi_{m,n}}\left(1-\frac{1}{1+u_{m,n}\phi_{m,n}}\right)=0.
\end{align}
Then, $t_m^* $ and $l^*_{m,n}$ can be derived, as shown in Theorem \ref{opt_load}.

\section{Proof of Theorem 3} \label{a3}

Given any $\k$ and $\b$, the Lagrangian of  $\mathcal{P}6$ is given by
\begin{align}
&\mathcal{L}(\l,t, \lambda_m) \nonumber\\
&= t+
\lambda_m\left[L_m-\sum_{n\in\mathcal{N}'}\left(l_{m,n}-\frac{\theta_{m,n} l_{m,n}^2}{t}\right) \right], \forall m. \label{p6_lagrange}
\end{align}

For the non-convex optimization problem, the optimal solution $\{\l^*,t^*\}$ must satisfy the KKT conditions. By solving 
\begin{subequations}
	\begin{align}
	&\frac{\partial \mathcal{L}}{\partial l_{m,n}^*}=-\lambda_m^*+\lambda_m^*\theta_{m,n} \frac{2l_{m,n}^*}{t^*}=0,   \nonumber\\
	&\frac{\partial \mathcal{L}}{\partial t^*}=1-\lambda_m^*\sum_{n\in\mathcal{N}'}\frac{\theta_{m,n} (l_{m,n}^*)^2}{(t^*)^2}=0, \nonumber\\
	&\lambda_m^*\left[L_m-\sum_{n\in\mathcal{N}'}\left(l_{m,n}^*-\frac{\theta_{m,n} (l_{m,n}^*)^2}{t^*}\right) \right]=0,   \nonumber\\
	&\lambda_m^*\geq 0, ~l_{m,n}^*\geq 0,
	\end{align}
\end{subequations}
we derive the optimality condition, as shown in Theorem \ref{opt_cond}.


\begin{thebibliography}{100}
\bibitem{Sun2019Globecom}
Y. Sun, J. Zhao, S. Zhou, and D. Gunduz, ``Heterogeneous coded computation across heterogeneous workers," in \emph{Proc. IEEE Global Commun. Conf.}, Waikoloa, HI, USA, Dec. 2019.	

\bibitem{MapReduce}
J. Dean and S. Ghemawat, ``MapReduce: Simplified data processing on large clusters,” \emph{Communications of the ACM}, vol. 51, no. 1, pp.107-113, Jan. 2008.

\bibitem{Googlelarge-scale}
J. Dean, et al. ``Large scale distributed deep networks,'' in \emph{Proc. Advances in Neural Information Processing Systems (NIPS)}, Dec. 2012.

\bibitem{lee2018speeding}
K. Lee, M. Lam, R. Pedarsani, D. Papailiopoulos, and K. Ramchandran, ``Speeding up distributed machine learning using codes," \emph{IEEE Trans. Inf. Theory}, vol. 64, no. 3, pp. 1514-1529, Mar. 2018.

\bibitem{codedhet}
A. Reisizadeh, S. Prakash, R. Pedarsani, and A. S. Avestimehr, ``Coded computation over heterogeneous clusters,'' 
\emph{IEEE Trans. Inform. Theory}, vol. 65, no. 7, pp. 4227-4242, July 2019.

\bibitem{hierarchical}
N. Ferdinand and S. C. Draper, ``Hierarchical coded computation,'' in \emph{Proc. IEEE Int. Symp. on Inform. Theory (ISIT)}, Vail, CO, USA, Jun.
2018, pp. 1620-1624.

% redundancy 
\bibitem{gardner2017redundancy}
K. Gardner, M. Harchol-Balter,  A. Scheller-Wolf, M. Velednitsky, and S. Zbarsky, ``Redundancy-d: The power of d choices for redundancy,'' \emph{Operations Research}, vol. 65, no. 4, pp. 1078-1094, Apr. 2017.

\bibitem{joshi2017efficient}
G. Joshi, E. Soljanin, and G. Wornell, ``Efficient redundancy techniques for latency reduction in cloud systems," \emph{ACM Trans. Modeling Perform. Eval. Comput. Syst. (TOMPECS)} vol. 2, no. 2, pp. 1-30, Apr. 2017.

%scheduling
\bibitem{mma2019computation}
M. Mohammodi Amiri and D. Gunduz, ``Computation scheduling for distributed machine learning with straggling workers," \emph{IEEE Trans.
	Signal Process.}, vol. 67, no. 24, pp. 6270-6284, Dec. 2019.


\bibitem{Ng2020survey}
J. S. Ng, W. Y. B. Lim, N. C. Luong, Z. Xiong, A. Asheralieva, D. Niyato, C. Leung, and C. Miao, ``A survey of coded distributed computing,'' \emph{arXiv preprint arXiv:2008.09048}, Aug. 2020.

% MDS coding
\bibitem{li2016unified}
S. Li, M. A. Maddah-Ali, and A. S. Avestimehr, ``A unified coding framework for distributed computing with straggling servers," \emph{IEEE Global Commun. Conf. Workshop}, Washington, DC, USA, Dec. 2016.

\bibitem{Kim2021optimal}
D. Kim, H. Park, and J. K. Choi, ``Optimal load allocation for coded distributed computation in heterogeneous clusters,'' \emph{IEEE Trans. Commun.}, vol. 69, no. 1, pp. 44-58, Jan. 2021.

\bibitem{zhang2021coded}
F. Zhang, Y. Sun, and S. Zhou, ``Coded computation over heterogeneous workers with random task arrivals," \emph{IEEE Commun. Lett.}, vol. 25, no. 7, pp. 2338-2342, July 2021.

% gradient coding
\bibitem{tandon2017}  
R. Tandon, Q. Lei, A. G. Dimakis, and N. Karampatziakis, ``Gradient coding: avoiding stragglers in distributed learning," in \emph{Proc. Int. Conf. on Machine Learning}, Sydney, Australia, Aug. 2017, pp. 3368-3376.	

\bibitem{Reisizadeh2019tree}
A. Reisizadeh, S. Prakash, R. Pedarsani, and S. A. Avestimehr, ``Tree gradient coding,'' in \emph{Proc. IEEE Int. Symp. on Inform. Theory (ISIT)}, Paris, France, July 2019.

\bibitem{Bitar2020stochastic}
R. Bitar, M. Wootters, and S. El Rouayheb, ``Stochastic gradient coding for straggler mitigation in distributed learning,'' \emph{IEEE J. Sel. Areas Commun.}, vol. 1, no. 1, pp. 277-291, May 2020.

% lagrange
\bibitem{Yu2019lagrange}
Q. Yu, S. Li, N. Raviv, S. M. M. Kalan, M. Soltanolkotabi, and S. A. Avestimehr, ``Lagrange coded computing: Optimal design for resiliency, security, and privacy,'' in \emph{Proc. 22rd Int. Conf. on Artificial Intelligence and Statistics}, PMLR, vol. 89, pp. 1215-1225, Apr. 2019.

\bibitem{Hasircioglu2021Bivariate}
B. Hasircioglu, J. Gomez-Vilardebo and D. Gunduz, ``Bivariate polynomial coding for efficient distributed matrix multiplication," in \emph{IEEE J. Sel. Areas Inform. Theory}, early access, Aug. 2021.

%%coded FL
%\bibitem{Prakash2021codedFL}
%S. Prakash et al., ``Coded computing for low-latency federated learning over wireless edge networks," \emph{IEEE J. Sel. Areas Commun.}, vol. 39, no. 1, pp. 233-250, Jan. 2021.

%multi-message communication
\bibitem{Ozfatura2019speeding}
E. Ozfatura, D. Gunduz, and S. Ulukus, ``Speeding up distributed gradient descent by utilizing non-persistent stragglers,'' in \emph{Proc. IEEE Int. Symp. on Inform. Theory (ISIT)}, Paris, France, July 2019.

\bibitem{Ozfatura2020straggler}
E. Ozfatura, S. Ulukus, and D. Gunduz, ``Straggler-aware distributed learning: Communication-computation latency trade-off,'' \emph{Entropy}, 22(5):544, May 2020.


\bibitem{Buyukates2020timely}
B. Buyukates and S. Ulukus, ``Timely distributed computation with stragglers,” \emph{IEEE Trans. Commun.}, vol. 68, no. 9, pp. 5273-5282, Sept. 2020.
%
%\bibitem{Ozfature2020age}
%E. Ozfatura, B. Buyukates, D. Gunduz, and S. Ulukus, ``Age-based coded computation for bias reduction in distributed learning," in \emph{Proc. IEEE Global Commun. Conf.}, Dec. 2020.


% communication
\bibitem{Li2017scalable}
S. Li, Q. Yu, M. A. Maddah-Ali, and A. S. Avestimehr, ``A scalable framework for wireless distributed computing,” \emph{IEEE/ACM Trans. Netw.}, vol. 25, no. 5, pp. 2643–2654, May 2017.


\bibitem{Han2019coded}
D. J. Han, J. -y. Sohn, and J. Moon, ``Coded distributed computing over packet erasure channels,"  in \emph{Proc. IEEE Int. Symp. on Inform. Theory (ISIT)}, Paris, France, July 2019.

\bibitem{Wu2020latency}
F. Wu and L. Chen, ``Latency optimization for coded computation straggled by wireless transmission," \emph{IEEE Wireless Commun. Lett.}, vol. 9, no. 7, pp. 1124-1128, July 2020.

\bibitem{Li2021coded}
K. Li, M. Tao, J. Zhang and O. Simeone, ``Coded computing and cooperative transmission for wireless distributed matrix multiplication,'' \emph{IEEE Trans. Commun.}, vol. 69, no. 4, pp. 2224-2239, Apr. 2021.

\bibitem{Huynh2021joint}
N. Van Huynh, D. T. Hoang, D. N. Nguyen, and E. Dutkiewicz, ``Joint coding and scheduling optimization for distributed learning over wireless edge networks,'' \emph{arXiv preprint arXiv:2103.04303}, Mar. 2021.


%%%% theory
\bibitem{chakrabarty2009on}
D. Chakrabarty, J. Chuzhoy, and S. Khanna, ``On allocating goods to maximize fairness," \emph{50th Annual IEEE Symposium on Foundations of Computer Science}, Atlanta, GA, USA, Oct. 2009, pp. 107-116.

\bibitem{asadpour2010an}
A. Asadpour, and A. Saberi, ``An approximation algorithm for max-min fair allocation of indivisible goods," \emph{SIAM J. Algebraic Discrete Methods}, vol. 39, no. 7, pp. 2970-2989, May 2010.

\bibitem{hayes2002}
B. Hayes, ``Computing science: the easiest hard problem," \emph{American Scientist}, vol. 90, no. 2, pp. 113-117, Apr. 2002.

\bibitem{peyro2010iterated}
L. Fanjul-Peyro, R. Ruiz, ``Iterated greedy local search methods for unrelated parallel machine scheduling," \emph{European Journal of Operational Research}, vol. 207, no. 1, pp. 55-69, Nov. 2010.

\bibitem{bryan1982scheduling}
B. Deuermeyer, D. Friesen, and M. Langston, ``Scheduling to maximize the minimum processor finish time in a multiprocessor system," \emph{SIAM J. Algebraic Discrete Methods}, vol. 3, no. 2, pp. 190-196, Jun. 1982.

\bibitem{scutari2017p1}
G. Scutari, F. Facchinei, and L. Lampariello, ``Parallel and distributed methods for constrained nonconvex optimization -part I: theory," \emph{IEEE Trans. Signal Process.}, vol. 65, no. 8, pp. 1929-1944, Apr. 2017.

%%%%%%%%%%%%
	
	
%	\bibitem{li_nearoptimal}
%	S. Li, S. M. M. Kalan, A. S. Avestimehr, and M. Soltanolkotabi, ``Near-optimal straggler mitigation for distributed gradient methods," \emph{IEEE Int. Parallel and Distributed Processing Symp. Workshops}, Vancouver, BC, Canada, May 2018, pp. 857-866.
%	
%	\bibitem{new_on_the_effect}
%	A. Behrouzi-Far and E. Soljanin, ``On the effect of task-to-worker assignment in distributed computing systems with stragglers," \emph{56th Annual Allerton Conf. on Commun., Control, and Comput.}, Monticello, IL, USA, Oct. 2018, pp. 560-566.

	
\end{thebibliography}
\end{document}